\newtheorem{theorem}{Theorem}[section]
\newtheorem{definition}{Definition}[section]
\newtheorem{remark}{Remark}[section]
\newcommand{\be}{\begin{equation}}
\newcommand{\ee}{\end{equation}}
\newcommand{\ba}{\begin{align}}
\newcommand{\ea}{\end{align}}
\newcommand{\PP}{\mathcal{P}}
\newcommand{\cl}{\mathcal{L}}
\newcommand{\La}{\Lambda}
\newcommand{\N}{\mathbb{N}}
\newcommand{\R}{\mathbb{R}}
\newcommand{\dd}{,\dots,}
\newcommand{\kk}{\mathbf{k}}
\newcommand{\e}{\mathbf{e}}
\newcommand{\Y}{\mathbf{Y}}
\newcommand{\ZZ}{\mathbb{Z}}
\newcommand{\NN}{\mathbb{N}}
\newcommand{\s}{\mathfrak{S}}
\newcommand{\lla}{\mathcal{L}_{\Lambda}}
\newcommand{\lv}{\left\vert}
\newcommand{\rv}{\right\vert}
\newcommand{\pa}{\partial}
\newcommand {\sSigma}{\mathfrak{S}}
\newcommand{\tg}{\tilde{g}}
\newcommand{\tA}{\tilde{A}}
\newcommand{\ry}{\mathrm{y}}
\newcommand{\rw}{\mathrm{w}}
\begin{document}

\title{Markov semigroups with hypocoercive-type generator in Infinite Dimensions II: Applications.}
\author{ V. Kontis \footnote{v.kontis@imperial.ac.uk,  
 Department of Epidemiology and Biostatistics
School of Public Health,
Imperial College London }, M. Ottobre \footnote{michelaottobre@gmail.com, Mathematics Institute, 
    Warwick University,  CV4 7AL, UK.} and B. Zegarlinski 
    \footnote{b.zegarlinski@imperial.ac.uk, Imperial College London, South Kensington Campus, London SW7 2AZ. Supported by Royal Society Wolfson RMA.} }

\date{\today}
\maketitle

 
\begin{abstract}
In this paper we show several applications of the general theory developed in \cite{MV_I}, where we studied smoothing and ergodicity for  infinite dimensional Markovian systems with hypocoercive type generator. 
   \end{abstract}

\section{Introduction} \label{Part II}


\label{S.4: Application of results} 
In \cite{MV_I} we studied infinite dimensional models of 
interacting dissipative systems   with hypocoercive-type generator
 and provided a basis of a general theory for controlling small and large time smoothing properties, existence of invariant measures and strong ergodicity. In particular we focussed on the following framework: Let $L$ be the Markov generator of a dissipative dynamics on $\R^m$, of the form
$$
L=Z_0^2+B,
$$
where $Z_0$ and $B$ are first order differential operators on $\R^m$, and suppose new fields are generated only through the interaction between the first and the second order part of the generator, i.e. for some $N\in \N$ there exist differential operators  $Z_1 \dd Z_N$ such that
$$
Z_{j+1}=[B, Z_j] \qquad \mbox{for all }j=0 \dd N-1.
$$ 
In \cite{MV_I}  we first studied the short and long time behaviour of the $n$-th order derivative of the semigroup $P_t\equiv e^{tL}$ generated by $L$, obtaining pointwise estimates for a suitable time dependent quadratic forms. We then considered infinitely many isomorphic copies of the generator $L$, each of them "placed" at a point of the lattice $\ZZ^d$, and let them interact, i.e. we looked at the dynamics 
on $(\R^m)^{\ZZ^d}$ generated by the Markov operator
$$
\cl=\sum_{x\in \ZZ^d}L_{x}+\sum_{y\in\ZZ^d}\sum_{i=0}^N q_{i,x}Z_{i,x}
+\sum_{y,y'\in\ZZ^d}\sum_{i,i' \in J}\s_{ii',yy'}Z_{i,y}Z_{i',y'},
$$ 
where $D_x$ denotes the isomorphic copy of a given a differential operator $D$  acting at the point $x\in \ZZ^d$ (for a more precise definition see \cite[Section 3]{MV_I}), $J$ is some subset of $I \equiv \{0 \dd N\}$ and $q_{i,y}= q_{i,y}(\omega), \s_{ii',yy'}= \s_{ii',yy'}(\omega), \omega\in  (\R^m)^{\ZZ^d}$ are interaction coefficients.  
 
After proving the well posedness of the semigroup generated by $\cl$, we analysed the smoothing properties of the infinite dimensional dynamics 
$\PP_t\equiv e^{t \cl}$ on $(\R^m)^{\ZZ^d}$, by estimating appropriate time dependent quadratic forms, inspired by the finite dimensional 
setting.  Finally, we studied the ergodicity of $\PP_t$, for which the equilibrium measure $\mu$ is not a  priori known. Existence of the invariant measure  is proved by Lyapunov function techniques, \cite[Section 4]{MV_I},
and uniqueness is obtained by methods analogous to those developed in \cite{DragoniKontisZegarlinski2011}. In particular we show that, once a Lyapunov function $\rho$ is known for the finite dimensional model $L$, $\tilde{\rho}=\sum_{x\in\ZZ^d}\rho_x$ is a Lyapunov function for $\cl$ (under some technical assumptions on the interaction functions, see \cite[Section 4]{MV_I}) and in \cite[Section 5]{MV_I} we provide several  criteria for the uniqueness of the equilibrium measure. 
 
 In the present paper we give a wealth of examples that belong to the framework described above. The organization and content of the paper is as follows. In Section \ref{SubS:Langevin equation} we describe an application of the theory developed in  \cite{MV_I} involving infinitely many interacting copies of Langevin-type dynamics. Here, we consider more general commutation relations than the ones described before and the finite dimensional generator arises in non-equilibrium Statistical Mechanics in the context of the  heat baths formalism, so that the infinite dimensional dynamics considered in Section \ref{SubS:Langevin equation}  can be interpreted as resulting from the interaction of infinitely many heat baths as well as interaction between other degrees of freedom. For this example we show in detail how to prove short time smoothing properties, exponential decay to equilibrium as well as existence  and uniqueness of the invariant measure. 
 In Section \ref{Sec3}, instead, we present the most straightforward
  application, i.e. Lie groups of Heisenberg type: in Section   
  \ref{ExpDecay4FilliformAlgebra_FullDilation} and Section 
  \ref{Filliform Algebras Partial Dilation} the finite dimensional model is given by filiform algebras with full and partial dilation, respectively;  Section  \ref{H_model_PartialDilation} describes the Heisenberg model 
with partial dilation, and we conclude in Section \ref{B-S Model} with the B-S model. For all these dynamics we explain how to prove exponential convergence to equilibrium under appropriate assumptions on the interaction functions, which we will detail in each case. 
   
        
\section{Langevin dynamics}\label{SubS:Langevin equation}
An example to which the theory presented so far may be applied comes from non-equilibrium statistical mechanics, in particular from the Generalized Langevin equation (GLE), which is  a popular model of  for a particle coupled to a heat bath (see \cite{Kup03}), 
\begin{equation}\label{GLE}
\ddot{q}(t)=-\partial_qV(q)-\int_0^t ds\,\gamma(t-s)\dot{q}(s) +F(t).
\end{equation}
In (\ref{GLE}) $q(t)$ represents the position of the distinguished particle (here $q(t)\in\R$ just  for simplicity, the equation can be rewritten in $\R^n$), $V=V(q)$ is a potential, $\gamma(t)$ is a smooth kernel and  $F(t)$ is a mean zero stationary Gaussian process. Noise and memory kernel are related through the following  \textit{fluctuation dissipation principle}
\begin{equation}\label{fluct dissp principle1}
E(F(t)F(s))\sim \gamma(t-s).
\end{equation}
The GLE can be derived by describing the system "particle + bath"
as a mechanical system in which a distinguished particle interacts with $r$ heat bath molecules through linear springs with random stiffness parameter, and then taking the thermodynamic limit $r\rightarrow \infty$. 
The resulting equation (\ref{GLE})  is in general non-Markovian, though for some specific choices of the correlation function $\gamma(t)$ it is {\em equivalent} to a system of SDEs in an extended state space. The most general system that can be obtained via this procedure is as follows (see \cite{Kup03}):
\begin{subequations}\label{MarkovianApprox}
\begin{eqnarray}
dq&=&p\, dt \\
dp&=&-\partial_qV(q)\,dt+g\cdot u \,dt    \\
du&=& (-p\,g-\mathcal{A}u) \,dt+C\,dW(t) ,  
\end{eqnarray}
\end{subequations}
where $(q,p)\in\R^{2}$, $u$ and $g$ are column vectors of $\R^{d}$, $\cdot $ denotes Euclidean scalar product,  $W(t)=(W_1(t), \dots, W_{d}(t))$ is a $d$-dimensional Brownian motion, $V(q)$ is a potential and $\mathcal{A}$ and $C$ are constant coefficients $d\times d$ matrices, related through the fluctuation dissipation principle, which in the present case reads
\be\label{fluct-dissiy}   
\mathcal{A}+\mathcal{A}^T=CC^T.
\ee
Notice that (\ref{MarkovianApprox})  is a degenerate O-U process, degenerate in the sense that the diffusion matrix is singular.   
Also, $\mathcal{M}:=CC^T$ is a semipositive definite symmetric matrix and we denote $m=\mbox{Rank}\,\mathcal{M}$.
For the vector $g$, we shall always assume that $g\neq 0$ (to avoid the uninteresting case in which there is no coupling between the heat bath and the particle). In the remainder of this section we shall denote $ v=(q,p,u)\in\R^N$,  $N=d+2$. If we assume, without loss of generality (see \cite{Kup03, mythesis}) that we are working in the coordinate system in which $\mathcal{M}$ is diagonal \footnote{Observe that in this coordinate system $a_{ii}\geq 0$.}, the generator of (\ref{MarkovianApprox}) is    
\be \label{generator}
L=p\partial_q-\pa_qV\pa_p+\sum_{i=1}^d g_i \left(u_i\pa_p - p  
\pa_{u_i}\right) -
\sum_{i,j=1}^d a_{ij}u_j\pa_{u_i}+\frac{1}{2}\sum_{j=1}^d \mathcal{M}_{jj}\pa^2_{u_j}.
\ee
When $V(q)$ is quadratic, the operator $L$ is precisely of the form 
\[L\equiv \sum_iZ_{0,i}^2 +B\]
once we set $$B=p\partial_q-\pa_qV\pa_p+\sum_{i=1}^d g_iu_i\pa_p-p\sum_{i=1}^dg_i\pa_{u_i}-
\sum_{i,j=1}^d a_{ij}u_j\pa_{u_i}.$$
Taking copies of such a generator at any point of $\ZZ^d$ and then adding an interaction term can be interpreted as considering infinitely many interacting systems of the type "particle+heat bath". 

It was shown in \cite{mythesis} that for the generator $L$ it is possible to find  a confining  function $\bar\rho(v)$ with compact level sets such that
$$
\bar{L}\bar\rho(v) \leq - a\bar\rho(v)+ d,
$$
for some $a,d>0$, i.e. the function  required for compactness condition sufficient for existence of invariant measure. 
Constructing such a function for general matrices $\mathcal{A}$ and $C$ requires introducing quite some notation, so here we will give an explicit expression for $\bar\rho$ only for the simplest case in which $d=1$. For the general case we refer the reader to \cite{mythesis}.

If $d=1$, assuming that $V(q)=q^2/2$, we have
\be \label{genera.1}
L= \pa^2_u +\left(p\partial_q- q\pa_p+  g \left(u \pa_p - p  \pa_u \right) \right)
-
\lambda u\pa_u  \equiv Z_0^2 +B -\lambda D_0
\ee
with
\be \begin{split}
&[B,Z_0] = -g\pa_p \equiv Z_1  \\
&[B,Z_1] =  g(\pa_q -g \pa_u )\equiv Z_2 \\
&[B,Z_2] = -(1+g^2) Z_1   \\
&[Z_i,Z_j] = 0\,.
\end{split}\ee
For this generator  the corresponding function $\bar\rho$ can be chosen as follows
\begin{eqnarray*}
\bar\rho =\bar C\,\left( q^2 + p^2 + u^2\right) +\bar R\, pq+
g\bar H\, pu
\end{eqnarray*}
with
$$
\bar R\ll \bar H \ll \bar C 
$$
so that for some $\bar c\in(0,\infty)$
\[\bar\rho \geq \bar c\,\left( q^2 + p^2 + u^2\right), \,\,\, \mbox{and} \,\,\,  \bar L\bar\rho \leq - a_0\bar\rho +b_0\]
with some $\bar a_{0} ,\bar b_0 \in(0,\infty)$.
\begin{theorem}\label{thm:exsinvmeas}
 Consider an infinite system with  generator
\be \label{8_generator} 
\mathcal{L} \equiv \sum_{x\in\ZZ^d}  L_x + \sum_{y\neq x\in\ZZ^d} G_{xy} u_y\pa_{u_x} +\sum_{i=0,1,2;x\in\ZZ^d} q_{i,x}  Z_{i,x} 
\ee
with $ L_x $ denoting isomorphic copy of the operator $L$ defined in \eqref{genera.1} and $G_{xy},q_{i,x}$ interaction functions;
let
\[\rho\equiv\sum_{x\in\ZZ^d} \varepsilon_x  \rho_x \qquad \mbox{where} \qquad \rho_x\equiv\bar\rho(q_x,p_x,u_x)\]
and assume 
\begin{eqnarray}  \label{G}
\sum_{x} \varepsilon_x G_{xy} \leq G \varepsilon_y, \qquad 
\sup_{x}\sum_{y }   |G_{xy}| \leq G  \label{G} \\
|q_{i,x}|\leq \tilde q\left((1+u_x^2)^{\frac\delta2}+(1+p_x^2)^{\frac\delta2}+(1+q_x^2)^{\frac\delta2}\right) \label{GG}
\end{eqnarray}
with some $\delta\in[0,1]$ and sufficiently small $G,\tilde q\in(0,\infty)$.
Then there exist $a,b\in(0,\infty)$ such that
\be \label{th} \mathcal{L}  \rho \leq -a \rho +b. \ee
Thus the corresponding model has an invariant measure for any $\lambda>0$.
\end{theorem}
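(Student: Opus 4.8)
The strategy is a Lyapunov-function argument building on the single-site estimate $\bar{L}\bar\rho\leq-\bar a_0\bar\rho+\bar b_0$ recalled above: I will apply the three groups of terms in $\cl$ to $\rho=\sum_x\varepsilon_x\rho_x$ separately, show that the single-site part $\sum_x L_x$ provides the only genuinely coercive contribution, and absorb the two interaction terms into a small multiple of $\rho$ plus a constant controlled by the summability of the weights $\varepsilon_x$. Smallness of $G$ and $\tilde q$ then closes \eqref{th}.

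For the single-site part, since $L_x$ acts only on the variables $(q_x,p_x,u_x)$ and each $\rho_z$ depends only on the variables at site $z$, one has $L_x\rho_z=0$ for $z\neq x$; hence $\sum_x L_x\rho=\sum_x\varepsilon_x(\bar{L}\bar\rho)(q_x,p_x,u_x)\leq-\bar a_0\rho+\bar b_0\sum_x\varepsilon_x$, and the last sum is finite by the standing summability assumption on the weights.

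For the $G$-interaction, $\partial_{u_x}$ sees only the site-$x$ summand of $\rho$, and since $\bar\rho$ is quadratic the function $\partial_{u_x}\rho_x=2\bar C u_x+g\bar H p_x$ is linear, so that term equals $\sum_x\sum_{y\neq x}\varepsilon_x G_{xy}u_y(2\bar C u_x+g\bar H p_x)$. I would bound each cross term $u_yu_x$, $u_yp_x$ by Young's inequality and estimate the two resulting double sums using the two halves of \eqref{G} separately: $\sup_x\sum_y|G_{xy}|\leq G$ to perform the sum over $y$ at fixed $x$, and $\sum_x\varepsilon_x G_{xy}\leq G\varepsilon_y$ to perform the sum over $x$ at fixed $y$. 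This gives a bound $c\,G\sum_x\varepsilon_x(u_x^2+p_x^2)$ with $c$ depending only on $\bar C,\bar H,g$, which is $\leq(cG/\bar c)\rho$ since $\rho_x\geq\bar c(q_x^2+p_x^2+u_x^2)$. For the field part, each $Z_{i,x}$ acts only at site $x$, and $\bar\rho$ being quadratic each $Z_{i,x}\rho_x$ is affine-linear in $(q_x,p_x,u_x)$, so $|Z_{i,x}\rho_x|\leq C(1+|(q_x,p_x,u_x)|)$; combined with \eqref{GG} and $\delta\leq1$ this yields $|q_{i,x}Z_{i,x}\rho_x|\leq\tilde q\,C'(1+|(q_x,p_x,u_x)|^2)$ and hence $|\sum_{i,x}q_{i,x}Z_{i,x}\rho|\leq\tilde q\,C''(\sum_x\varepsilon_x+\bar c^{-1}\rho)$.

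Adding the three estimates gives $\cl\rho\leq(-\bar a_0+cG/\bar c+\tilde q C''/\bar c)\rho+(\bar b_0+\tilde q C'')\sum_x\varepsilon_x$, so for $G,\tilde q$ small enough — the threshold may depend on the fixed parameters, in particular on $\lambda$ via $\bar a_0,\bar b_0$, which is why the conclusion holds for every $\lambda>0$ — the coefficient of $\rho$ is strictly negative and \eqref{th} follows, with $a=\bar a_0-cG/\bar c-\tilde q C''/\bar c>0$ and $b=(\bar b_0+\tilde q C'')\sum_x\varepsilon_x<\infty$; existence of an invariant measure then follows from the general Lyapunov/compactness criterion of \cite[Section 4]{MV_I}. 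The only delicate point I anticipate is the bookkeeping in the $G$-interaction — splitting each cross term and applying the correct half of \eqref{G} to each double sum — together with checking that these formal rearrangements of infinite sums are justified within the well-posedness framework of \cite{MV_I}; the rest is routine absorption using that $\bar\rho$ is quadratic and $\delta\leq1$.
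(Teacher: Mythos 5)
Your proposal is correct and follows essentially the same route as the paper: split $\mathcal{L}\rho$ into the single-site, $G$-interaction and $q$-field parts, apply the one-site bound $\bar L\bar\rho\leq-\bar a_0\bar\rho+\bar b_0$, compute $\partial_{u_x}\rho_x$ and $Z_{i,x}\rho_x$ (affine in the site variables), and absorb via Young's inequality, \eqref{G}, \eqref{GG} and $\rho_x\geq\bar c(q_x^2+p_x^2+u_x^2)$ for $G,\tilde q$ small. You merely spell out the absorption bookkeeping (which half of \eqref{G} handles which double sum, and the role of $\delta\leq1$ and $\sum_x\varepsilon_x<\infty$) that the paper compresses into ``Therefore \eqref{th} follows using \eqref{G} and \eqref{GG}.''
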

\begin{proof}
We have
\begin{align*} 
\mathcal{L}  \rho &= 
\sum_{x\in \ZZ^d} \varepsilon_x   L_x \rho _x 
+ \sum_{y\neq x} \varepsilon_x G_{xy} u_y\pa_{u_x}  \rho_x 
 +\sum_{i=0,1,2}\sum_{x\in\ZZ^d} q_{i,x}  Z_{i,x}    \rho_x \\
 & \leq - \bar a\rho +\bar b \left(\sum_x \varepsilon_x\right) \\
 &+ \sum_{y\neq x\in\ZZ^d} \varepsilon_x G_{xy} u_y\left(2\bar C u_x+ g\bar H p_x \right) \\ 
 &+\sum_{ x\in\ZZ^d} \varepsilon_x q_{0,x}  \left(2\bar C u_x+ g\bar H p_x \right) \\
  &+\sum_{ x\in\ZZ^d} \varepsilon_x q_{1,x}   g\left( -2\bar C p_x- g\bar H u_x\right)\\
   &+\sum_{ x\in\ZZ^d} \varepsilon_x q_{2,x}   g\left(2\bar C q_x+ \bar R p_x -2g\bar C_xu_x - g^2\bar H p_x \right).  
\end{align*}
Therefore \eqref{th} follows using  \eqref{G} and \eqref{GG}.
\end{proof}
We note that
\begin{align*}   
&[Z_{0,x},\mathcal{L}] = -\lambda Z_{0,x} - Z_{1,x} +\sum_{y\in\ZZ^d} G_{yx} Z_{0,y} + \sum_{i=0,1,2}\sum_{y\in\ZZ^d} \left(Z_{0,x}q_{i,y}\right) Z_{i,y}\\
&\\
&[Z_{1,x},\mathcal{L}] = -Z_{2,x} + \sum_{i=0,1,2}\sum_{y\in\ZZ^d} \left(Z_{1,x}q_{i,y}\right) Z_{i,y}\\
&\\
&[Z_{2,x},\mathcal{L}] =(1+g^2)Z_{1,x} + \lambda g^2 Z_{0,y} + \sum_{i=0,1,2}\sum_{y\in\ZZ^d} \left(Z_{2,x}q_{i,y}\right) Z_{i,y}
-\sum_{y\in\ZZ^d} g^2 G_{yx} Z_{0,y} \,.
\end{align*}
Let us now look at the short time smoothing properties of the semigroup. In order to do so, we first calculate the following commutators:
\begin{align*}   
&[Z_{0,x},\mathcal{L}] = -\lambda Z_{0,x} - Z_{1,x} +\sum_{y\in\ZZ^d} G_{yx} Z_{0,y} + \sum_{i=0,1,2}\sum_{y\in\ZZ^d} \left(Z_{0,x}q_{i,y}\right) Z_{i,y}\, ,\\
&[Z_{1,x},\mathcal{L}] = -Z_{2,x} + \sum_{i=0,1,2}\sum_{y\in\ZZ^d} \left(Z_{1,x}q_{i,y}\right) Z_{i,y} \, ,\\
&[Z_{2,x},\mathcal{L}] =(1+g^2)Z_{1,x} + \lambda g^2 Z_{0,x} + \sum_{i=0,1,2}\sum_{y\in\ZZ^d} \left(Z_{2,x}q_{i,y}\right) Z_{i,y}
-\sum_{y\in\ZZ^d} g^2 G_{yx} Z_{0,y}\, ,
\end{align*}
and 
consider the following  functional 
$$
\Gamma_tf=\sum_{w\in \ZZ^d}\left[ \sum_{j=0}^2a_j t^{2j+1} \lv Z_{j,w} f\rv^2 + \sum_{j=1}^2  b_j t^{2j} (Z_{j-1,w}f) (Z_{j,w}f)
\right]+  a \lv f\rv^2, 
$$
for some strictly positive constants $a, a_j, b_j$ to be chosen.  Arguing as in the last section of Part I \cite{MV_I} 
to prove that $\Gamma_tf_t$ is a Lyapunov type functional, we  need to show that  
$$
\left(-\cl+ \partial_t \right)\left(\Gamma_tf_t \right) \leq 0,  \qquad f_t:= e^{t\cl }f_0, \,\,t\in (0,1). 
$$
To this end, using \cite[Lemma 2.2]{MV_I},  we note 
\begin{align}
\left(-\cl+ \partial_t \right)\left(\Gamma_tf_t \right)&= \sum_{w,y\in \ZZ^d}\left[ \sum_{j=0}^2 -2 a_j t^{2j+1} \lv Z_{0,y}Z_{j,w}
f_t\rv^2+ \sum_{j=1}^2 -2b_j t^{2j}(Z_{0,y}f_t) (Z_{j,w}f_t)\right] \label{q1}\\
& +\sum_{w\in\ZZ^d}\left[-2a\lv Z_{0,w}f_t\rv^2+ \sum_{j=0}^2 2 a_j t^{2j+1} (Z_{j,w}f_t)([Z_{j,w},\cl]f_t)\right.\label{q2}\\
&+ \left.\sum_{j=1}^2 b_j t^{2j}([Z_{j-1,w},\cl]f_t)(Z_{j,w}f_t)
+  b_jt^{2j}(Z_{j-1,w}f_t)([Z_{j,w}, \cl]f_t)   \right]\label{q3}\\
&+ \sum_{w\in\ZZ^d}\left[\sum_{j=0}^2 a_j (2j+1)t^{2j}\lv Z_{j,w}f_t\rv^2+
\sum_{j=1}^2 2j b_jt^{2j-1}(Z_{j-1,w}f_t)(Z_{j,w}f_t)\right] \label{q4}.
\end{align}
Let us set $(I):= \eqref{q1}, \, (II):=\eqref{q2}+\eqref{q3}, \,(III):=\eqref{q4}$ and look at these terms separately. 
Repeatedly using the quadratic Young's inequality, we get
$$
(I)\leq\sum_{y,w\in \ZZ^d}\sum_{j=0}^22(-a_j+b_{j+1}^2+1)t^{2j+1}\lv Z_{0,y}Z_{j,w}f_t\rv^2, $$ 
with the understanding that $b_3\equiv 0$. 
\begin{align*}
(II)&= 
\sum_{w\in\ZZ^d}\!
\left[\!-\lambda Z_{0,w}f_t - Z_{1,w}f_t +\!\sum_{y\in\ZZ^d}\!\left( \!
 G_{yw} Z_{0,y}f_t + \!
\sum_{i=0}^2 \left(Z_{0,w}q_{i,y}\right) Z_{i,y}f_t\right)\!
 \right] \left\{ 2a_0 t (Z_{0,w}f_t)+ b_1 t^2 (Z_{1,w} f_t) \right\}\\
 &+\left[-Z_{2,w}f_t +\sum_{y\in\ZZ^d} (Z_{1,w} q_{i,y})Z_{i,y}f_t \right]
 \left\{ 2a_1t^3 (Z_{1,w}f_t)+b_2t^4 (Z_{2,w}f_t)+b_1t^2(Z_{0,w}f_t)   \right\}
 -2a\lv Z_{0,w}f_t\rv^2\\
 &+\left[\!(1\!+\!g^2)Z_{1,w}f_t\! +\! \lambda g^2 Z_{0,w}f_t \!+
 \! \sum_{i=0}^2\!\sum_{y\in\ZZ^d} \!\left(Z_{2,w} q_{i,y}\right) Z_{i,y}f_t
  \!-\!\!\sum_{y\in\ZZ^d}\! g^2 G_{yw} Z_{0,y}f_t \! \right]
  \left\{ 2a_2 t^5 (Z_{2,w}f_t) \right. \\
 & \left. +b_2t^4 (Z_{1,w}f_t) \right\}.
\end{align*}
Now assume that
\be\label{asss1}
Z_{0,w}q_{2,y}=0, \qquad  \sup_{i,j=0,1,2}\,\sup_{y,w\in\ZZ^d}\lv Z_{j,y}q_{i,w}\rv \leq  \psi<1,
\qquad \mbox{for some constant } \psi>0
\ee
and that 
\be\label{asss2}
\sum_{x,y}|G_{xy}|<\bar{G}. 
\ee
We obtain for $t\in(0,1)$ 
\begin{align*}
(II)+(III)&\leq 
\sum_{w\in\ZZ^d}[-2a -2a_0\lambda t +6 (\lambda+\bar{G}+g+\psi+1)^2 
\sum_{j=0}^2(a_j^2+b_j^2)]
\lv Z_{0,w}f_t\rv^2\\
&+\sum_{w\in\ZZ^d}\left[ (\psi-1)b_1t^2 +6(\lambda+\bar{G}+ g+\psi+1)^2(a_1^2+a_2^2+b_2^2)\right] \lv Z_{1,w}f_t\rv^2\\
&+\sum_{w\in\ZZ^d}\left[  (\psi-1)b_2t^4+ 2b_2 t^5+6(\lambda+ g+\psi+1)^2a_2^2 \right] \lv Z_{2,w}f_t\rv^2.
\end{align*}
Choosing $a \gg a_0 \gg b_1  \gg a_1\gg b_2\gg a_2$, we can conclude the proof that the Lyapunov function $\Gamma_tf_t$ is non-increasing. We have therefore shown the following. 
\begin{theorem}
Consider the infinite dimensional dynamics $f_t:=e^{t\cl} f_0$ generated by the Markov operator \eqref{8_generator} and assume \eqref{asss1} and \eqref{asss2} hold. Then for any $j=0 \dd 2$ and any $w\in \ZZ^d$ we have
$$
\lv Z_{j,w} f_t\rv^2 \leq \frac{C}{t^{2j+1}} \|f_0\|_{\infty}^2.
$$
\end{theorem}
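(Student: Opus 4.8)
The plan is to read the asserted smoothing bound off the dissipation inequality $(-\cl+\partial_t)(\Gamma_t f_t)\le 0$ just established for $t\in(0,1)$, by exactly the argument used in the last section of Part I \cite{MV_I}. Fix $t\in(0,1)$ and, for $s\in[0,t]$, set
\[
\Phi(s):=e^{(t-s)\cl}\bigl(\Gamma_s f_s\bigr),\qquad f_s:=e^{s\cl}f_0 .
\]
Differentiating in $s$ and using that $\cl$ commutes with $e^{(t-s)\cl}$ gives $\Phi'(s)=e^{(t-s)\cl}\bigl[(-\cl+\partial_s)(\Gamma_s f_s)\bigr]$. The bracket is non-positive by the computation carried out above (with the ordering $a\gg a_0\gg b_1\gg a_1\gg b_2\gg a_2$), and the Markov semigroup $e^{(t-s)\cl}$ is order preserving, so $\Phi'(s)\le 0$; hence $\Phi$ is non-increasing and $\Phi(t)\le\Phi(0)$.

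It remains to evaluate the endpoints and to disentangle the individual derivatives. Since every power of $t$ appearing in $\Gamma_t$ is strictly positive, $\Gamma_0$ reduces to $a|\cdot|^2$, so $\Phi(0)=a\,e^{t\cl}(|f_0|^2)\le a\|f_0\|_\infty^2$ because $e^{t\cl}1=1$; and $\Phi(t)=\Gamma_t f_t$. Thus $\Gamma_t f_t\le a\|f_0\|_\infty^2$ pointwise. For fixed $w$, the substitution $Z_{j,w}f_t\mapsto t^{(2j+1)/2}Z_{j,w}f_t$ turns the part of $\Gamma_t$ supported at $w$ into the $t$-independent form $a_0Y_0^2+a_1Y_1^2+a_2Y_2^2+b_1Y_0Y_1+b_2Y_1Y_2$, which is positive definite once $b_1,b_2$ are taken small relative to $a_0,a_1,a_2$ (this is built into the ordering); hence it is bounded below by $c\,(Y_0^2+Y_1^2+Y_2^2)$ for some $c>0$, i.e.\ by $c\,t^{2j+1}|Z_{j,w}f_t|^2$ for each $j$. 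Since all remaining terms of $\Gamma_t f_t$ are non-negative, $c\,t^{2j+1}|Z_{j,w}f_t|^2\le\Gamma_t f_t\le a\|f_0\|_\infty^2$, which is the claim with $C:=a/c$. The case $t\ge 1$ follows from the case $t=\tfrac12$ by writing $f_t=e^{\frac12\cl}f_{t-1/2}$ and using that $e^{\frac12\cl}$ is an $L^\infty$-contraction, so $C$ may be chosen uniformly in $t\in(0,\infty)$.

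The only genuine difficulty is the infinite-dimensional bookkeeping behind the first step: one must justify differentiating $\Phi$, commuting $\cl$ with $e^{(t-s)\cl}$, and interchanging $\cl$ and $\partial_t$ with the infinite sum over $w\in\ZZ^d$ in $\Gamma_t f_t$, and must know a priori that $\Gamma_t f_t$ is finite (and that $Z_{j,w}f_t$ makes sense). This is precisely the machinery set up in \cite{MV_I} for functionals of this form; it applies here because the interaction terms $G_{xy}u_y\partial_{u_x}$ and $q_{i,x}Z_{i,x}$ of \eqref{8_generator} satisfy, thanks to \eqref{asss1}--\eqref{asss2}, the structural assumptions used there --- in particular the uniform bounds on $Z_{j,y}q_{i,w}$ and the summability of $G_{xy}$, which are exactly the inputs that made $(II)+(III)\le 0$ above. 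Modulo invoking that framework, the three steps above finish the proof.
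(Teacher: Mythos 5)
Your argument is essentially the paper's own proof: the theorem is obtained there exactly by the dissipation inequality $(-\cl+\partial_t)(\Gamma_t f_t)\le 0$ together with the standard interpolation-along-the-semigroup conclusion from Part I, which is what you make explicit (monotonicity of $e^{(t-s)\cl}(\Gamma_s f_s)$, evaluation at the endpoints giving $\Gamma_t f_t\le a\|f_0\|_\infty^2$, the pointwise lower bound on the quadratic form, and the trivial extension to $t\ge1$ via the $L^\infty$-contraction). The only caveat, shared with the paper, is that positivity of the form at each site needs $b_1^2\lesssim a_0a_1$ and $b_2^2\lesssim a_1a_2$ rather than being literally ``built into'' the chain $a\gg a_0\gg b_1\gg a_1\gg b_2\gg a_2$; this is a matter of tuning the constants, not a gap.
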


\label{LongTimeBehaviour} To consider the long time behaviour, we note first that for $\tilde B \equiv B -\lambda D$, $\lambda\in(0,\infty)$,
there exists vector fields $V_0$ and $V_{\pm}$ with constant coefficients such that
\[ [\tilde B,V_0]=\xi_0 V_0,\qquad [\tilde B,V_{\pm}]=\xi_{\pm} V_{\pm}
\]
with $\bar\xi_+=\xi_-$ and $\xi_0, \Re\xi_{\pm}\in(0,\infty)$.
Setting $V_{0,x}$ and $V_{\pm,x}$, $x\in\ZZ^d$, for isomorphic copies of $V_0$ and $V_{\pm}$, we consider the following generator
\be \label{8a_generator}
\mathcal{L} \equiv \sum_{x\in\ZZ^d}   L_x + \sum_{y\neq x\in\ZZ^d} G_{xy} u_y\pa_{u_x} +\sum_{x\in\ZZ^d} \eta_{0,x}  V_{0,x}\, , 
\ee
where $\eta_{0,x}$ are smooth bounded functions with bounded derivatives.
We have
\begin{align*} &\partial_s e^{ms}P_{t-s}\left(\sum_{x\in\ZZ^d} \left(|V_{0,x}f_s |^2 +|V_{+,x}f_s|^2 +|V_{-,x}f_s|^2\right)\right)= \\
&-2e^{ms}P_{t-s}\left(\sum_{x\in\ZZ^d} \left(|Z_0V_{0,x}f_s |^2 +|Z_0V_{+,x}f_s|^2 +|Z_0V_{-,x}f_s|^2\right)\right)
\\
&+ e^{ms}P_{t-s}\left(\sum_{x\in\ZZ^d} \left( (m-2\xi_0)|V_{0,x}f_s |^2 +(m-2\Re\xi_+)|V_{+,x}f_s|^2 +(m-2\Re\xi_-)|V_{-,x}f_s|^2\right)\right) \\
&
+  e^{ms}P_{t-s}\left(\sum_{x\in\ZZ^d} \left(
2 \sum_{y\in\ZZ^d}    (V_{0,x}\eta_{0,y}) V_{0,x}f_s \cdot 
 V_{0,y}f_s \right)\right)\, . 
\end{align*}
Hence, if
\be\label{assm}   m \leq \min\left( 2\xi_0 - \sup_{z\in\ZZ^d}\sum_{y\in\ZZ^d}   \frac12\left( \|V_{0,z}\eta_{0,y}\|_\infty +\|V_{0,y}\eta_{0,z}\|_\infty\right)   , 2\Re\xi_\pm \right),
\ee
then 
$$
\partial_s e^{ms}P_{t-s}\left(\sum_{x\in\ZZ^d} \left(|V_{0,x}f_s |^2 +|V_{+,x}f_s|^2 +|V_{-,x}f_s|^2\right)\right) \leq 0.
$$
Integrating the above expression in $[0,t]$ proves the statement of Theorem \ref{thm23} below.
\begin{theorem}\label{thm23}
With the notation introduced so far and assuming \eqref{assm},  the following exponential decay holds along the semigroup generated by $\cl$:
\[\sum_{x\in\ZZ^d} \left(|V_{0,x}f_t |^2 +|V_{+,x}f_t|^2 +
|V_{-,x}f_t|^2 \right)
 \leq e^{-mt} P_t \left(\sum_{x\in\ZZ^d} \left(    
|V_{0,x}f_0 |^2 +|V_{+,x}f_0|^2 +|V_{-,x}f_0|^2\right) \right).
\]
\end{theorem}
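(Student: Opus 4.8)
\emph{Proof strategy.} The plan is to introduce the auxiliary time-dependent functional
$$
\Phi_s \;:=\; e^{ms}\,P_{t-s}\!\left(\sum_{x\in\ZZ^d}\Bigl(|V_{0,x}f_s|^2+|V_{+,x}f_s|^2+|V_{-,x}f_s|^2\Bigr)\right),\qquad s\in[0,t],
$$
to show that $s\mapsto\Phi_s$ is non-increasing, and then to integrate the inequality $\partial_s\Phi_s\le0$ over $s\in[0,t]$. Since $P_0=\mathrm{Id}$ and $f_t=e^{t\cl}f_0$, this yields $\Phi_t\le\Phi_0$, which is precisely the asserted estimate; so everything reduces to the monotonicity of $\Phi_s$.

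First I would compute $\partial_s\Phi_s$. Writing $g_s:=\sum_{x\in\ZZ^d}(|V_{0,x}f_s|^2+|V_{+,x}f_s|^2+|V_{-,x}f_s|^2)$ and using $\partial_s f_s=\cl f_s$ together with $\partial_s P_{t-s}h=P_{t-s}(-\cl h+\partial_s h)$, one gets $\partial_s\Phi_s=e^{ms}P_{t-s}\bigl((m+\partial_s-\cl)g_s\bigr)$. For each field $V\in\{V_{0,x},V_{+,x},V_{-,x}\}$ one then expands $V\cl f_s=\cl(Vf_s)+[V,\cl]f_s$ and applies the carré-du-champ identity $\cl|h|^2-2h\,\cl h=2\sum_z|Z_{0,z}h|^2$ from \cite[Lemma 2.2]{MV_I} to $h=Vf_s$, which turns the diagonal second-order contribution into the manifestly non-positive term $-2\sum_z|Z_{0,z}Vf_s|^2$. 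It then remains to evaluate the commutators $[V,\cl]$: the constant-coefficient fields $V_0,V_\pm$ commute with the diffusion fields $Z_{0,z}$ and with the interaction drift $G_{zz'}u_{z'}\pa_{u_z}$; the relations $[V_{0,x},\tilde B_x]=-\xi_0V_{0,x}$, $[V_{\pm,x},\tilde B_x]=-\xi_\pm V_{\pm,x}$ produce the diagonal factors $m-2\xi_0$ and $m-2\Re\xi_\pm$; and $[V_{0,x},\eta_{0,y}V_{0,y}]=(V_{0,x}\eta_{0,y})V_{0,y}$ produces the off-diagonal coupling $2\sum_{x,y}(V_{0,x}\eta_{0,y})(V_{0,x}f_s)(V_{0,y}f_s)$, the analogous contributions in the $V_\pm$-sectors being absent because these fields annihilate the coefficients $\eta_{0,y}$. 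This is exactly the displayed formula for $\partial_s\Phi_s$.

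Next, I would bound the coupling term by the quadratic Young inequality,
$$
2\sum_{x,y}(V_{0,x}\eta_{0,y})(V_{0,x}f_s)(V_{0,y}f_s)\le\sum_x\Bigl(\sum_y\bigl(\|V_{0,x}\eta_{0,y}\|_\infty+\|V_{0,y}\eta_{0,x}\|_\infty\bigr)\Bigr)|V_{0,x}f_s|^2,
$$
and fold it into the $V_0$ diagonal term; the latter then carries the coefficient $m-2\xi_0+\tfrac12\sum_y(\|V_{0,x}\eta_{0,y}\|_\infty+\|V_{0,y}\eta_{0,x}\|_\infty)$, which by \eqref{assm} (together with $m\le2\Re\xi_\pm$ for the remaining two sectors) is $\le0$, uniformly in $x$. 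Hence the function inside $P_{t-s}$ is pointwise $\le0$, and since $P_{t-s}$ is a positive (Markov) operator we conclude $\partial_s\Phi_s\le0$; integrating over $[0,t]$ finishes the argument.

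I expect the main obstacle to be not this algebra --- routine once \cite[Lemma 2.2]{MV_I} and the eigen-relations for $\tilde B$ are available --- but the analytic justification in infinite dimensions: the convergence of the sums over $\ZZ^d$, the regularity of $f_s$ needed for all the fields $V_{\cdot,x}$ to act and for the semigroup and commutator identities to hold, and the legitimacy of interchanging $\partial_s$, $P_{t-s}$, and $\sum_{x\in\ZZ^d}$. As elsewhere in \cite{MV_I}, this is handled by working first with finite-volume truncations of $\cl$ --- where all manipulations are classical --- deriving the estimate with constants independent of the truncation (which is what \eqref{assm} guarantees), and then passing to the limit.
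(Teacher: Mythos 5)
Your proposal reproduces the paper's argument essentially verbatim: the same functional $e^{ms}P_{t-s}\bigl(\sum_{x\in\ZZ^d}(|V_{0,x}f_s|^2+|V_{+,x}f_s|^2+|V_{-,x}f_s|^2)\bigr)$, the same derivative identity producing the non-positive $-2\sum|Z_0V_{\cdot,x}f_s|^2$ term via \cite[Lemma 2.2]{MV_I}, the diagonal factors $m-2\xi_0$ and $m-2\Re\xi_\pm$ from the eigen-relations, the off-diagonal $(V_{0,x}\eta_{0,y})$ coupling absorbed by Young's inequality under \eqref{assm}, and integration of $\partial_s\le 0$ over $[0,t]$. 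The only point worth noting is that your assertion that $V_0,V_\pm$ commute with the interaction drift $G_{xy}u_y\pa_{u_x}$ is exactly the simplification already implicit in the paper's displayed computation (which likewise omits that commutator, even though the $V$'s have a nonzero $\pa_u$ component), so your route coincides with the paper's.
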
 
\newpage

\section{Lie groups of Heisenberg type}\label{Subs: Lie groups of Heisenberg type} \label{Sec3}
In this section, we present an application our results to Markov generators on groups of Heisenberg type. For a detailed introduction to such groups we refer the reader to \cite{BLU}.
\begin{definition}
Let $\mathfrak{g}$ be a Lie algebra whose centre is \(\mathfrak{z}\) and let  \( \mathfrak{v}:=\mathfrak{z}^{\perp} \).We say that $\mathfrak{g}$ is of Heisenberg-type (or simply $H$-type) if \[[\mathfrak{v},\mathfrak{v} ]= \mathfrak{z}\] and there exists an inner product $\langle \cdot, \cdot\rangle$ on $\mathfrak{g}$ with $\left<\mathfrak{z}, \mathfrak{v}\right>=0$ such that  for any $Z\in\mathfrak{z}$, the map $J_Z: \mathfrak{v}\mapsto\mathfrak{v}$ given by
\[
\langle J_ZX, Y\rangle = \langle [X, Y], Z\rangle,
\]
for $X, Y\in \mathfrak{v}$, is an orthogonal map whenever \(\left< Z,Z\right>=1\).  An $H$-type group is a connected and simply connected Lie group $\mathbb{G}$ whose Lie algebra is of $H$-type.
\end{definition}

 Consider an $H$-type group $\mathbb{G}$ which is isomorphic to \(\R^n = \R^{m+r}\). We write elements of \(\mathbb{G}\) as \(w=(x,z)\), with  \(x \in \R^m\) and \(z \in \R^r\). We moreover denote the left-invariant fields by \(Z_{1},\dots, Z_m\) and their commutators by \(Z_{m+1}=[Z_1,Z_2],\dots     ,Z_{m+r}=[Z_{m-1},Z_m]. \) The Lie algebra is naturally equipped with a first order  operator \(D\) which generates dilations  and satisfies  
\begin{equation}
\label{Condition_onD}
e^{sD} Z_k e^{-sD}=e^{sl_k}Z_k \textrm{ and }\;[Z_k,D]=l_k Z_k ,
\end{equation}
 for all \(k=1,\dots,n\) and \(s>0\), where \(l_k=1\) for \(k=1, \dots m\) and \(l_k=2\) otherwise (the constants \(l_{k} \) reflect the layer of the Lie algebra that \(Z_{k} \) belongs to).
More specifically,  \(D\) is given as the generator of the dilations \(\delta_\lambda(w)=(\lambda x,\lambda^2 z),\) by
\begin{align}\label{Dcomputation}
D &=\partial_\lambda  \mid_{_{\lambda=1}}\delta_\lambda(w)=x\cdot \nabla _m+ 2z  \cdot \nabla_r ,
\end{align}
where \(\nabla_m\) and \(\nabla_r\) denote the Euclidean gradients on \(\R^{m}\) and \(\R^r\),  respectively.

We consider an operator \(\mathcal{L}\) on  $\mathbb{G}^{\ZZ^d}$, which is obtained as the infinite-dimensional limit, as \(\La \uparrow \ZZ^d\), of
\begin{align}
\lla:= & \sum_{x \in \ZZ^d}\mathcal{L}_x+ \sum_{y \in \Lambda}\sum_{r=1}^nq_{r} \sigma_{y} Z_{r,y}
\\ := &  \sum_{x \in \ZZ^d}  \left(\sum_{i,j=1}^{m}(\delta_{ij}+G_{ij}) Z_{i,x}Z_{j,x} +\sum _{i=1}^n p_{i}Z_{i,x}- \delta D_x\ \right)+\sum_{y \in \La}  \sum_{r=1}^nq_{r} \sigma_{y} Z_{r,y},
\end{align}
where \(\delta ,p_{i}, q_r >0\), \(G=(G_{ij})_{i,j=1}^m\) satisfies \(G+I>0\)  and the family \( \{\sigma_y \}_{y \in \La}\) is such that the quantities \( \Vert \sigma_y \Vert_\infty, \Vert Z_{r,y}\sigma_{\tilde{y}} \Vert _\infty\) are uniformly bounded in \(y,\tilde{y} \in \ZZ^d\) for \(r=1, \dots , n\). Similar operators were considered in   \cite{DragoniKontisZegarlinski2011}, under the assumption that \(\delta\) is large enough. Here, however, we only require that \(\delta>0\).
\par In what follows, we construct a Lyapunov function for \(\mathcal{L}\), which satisfies the assumptions of Section 4 of \cite{MV_I} .
The group $\mathbb{G}$ is equipped with the so-called   Folland-Kaplan gauge \(N\), defined by 
\begin{align}\label{gaugedef} N(w)= \left( \vert x\vert^{4}+16\vert z  \vert ^2\right)^{1/4},\end{align}
where  \(\vert \cdot \vert \) denotes the Euclidean norm. A computation shows \cite{DGR}  that the sub-gradient and the sub-Laplacian of \(N\) read 
\begin{equation}\label{Ngradbd}
 \vert \nabla _0 N \vert ^2:=\sum_{i=1}^m \vert Z_i N \vert ^2(w)= \frac{\vert x\vert ^2}{N^2(w)}
\end{equation}
 and
\begin{equation}  
\Delta _0 N := \sum_{i=1}^{m} Z_i^2N(w)=3\frac{\vert x \vert ^2}{N^3(w)},
\end{equation}
respectively. In particular, \(\vert \nabla_0 N\vert  \le 1\), since \(\vert x \vert \le N\). Moreover,  for \(j >m\),  \(Z_{j}= \partial_j\), for all \(i=1, \dots, r\) \cite{BLU}, and therefore  
\[Z_{m+i} N=\frac{8 z_{i}}{N^{3}},\]
which in turn implies
\[ \sum_{i=1}^n \vert Z_i N \vert ^2(w)=\frac{\vert x\vert ^2}{N^2(w)} +\frac{64 \vert z \vert ^2}{N^6 (w)} \le1+\frac{4}{N^{2}(w)},\]
using that \(\vert x \vert \le N(w)\) and \(16 \vert z \vert ^2 \le N^4(w).\)
Let  \((\textup{Hess} N)^*\) denote the symmetrised Hessian  of \(N\), i.e. the \(m \times m\) matrix with elements 
 \[(\textup{Hess} N)^*_{ij}= \frac{1}{2}(Z_i Z_j N + Z_j Z_i N),\]
for \(i,j=1, \dots, m\).  It was shown in    \cite{garofalotyson} that 
\[(\textup{Hess} N)^*_{ij}=\frac{1}{N^7} \left( N^4 \vert x \vert^2  \delta_{ij}+2N^4 \left(x_i x_j + \sum_{s=1}^r B_{is}B_{js} \right) - 3 \left<A,e_i \right>\left<A,e_j \right> \right),\]
where \(e_i\) denotes the \(i^{th}\) element of the standard basis of \(\R^m\) and  for \(s=1, \dots, r, \)
\[B_{is}= B_{is}(w)= \left< J_{e_{m+s}}x,e_i \right>\]
and \(A= \vert x \vert ^2 x+ 4 J_z x. \)  Using the identities  \( \vert J_z x \vert = \vert x \vert \vert z \vert\) and \(\left< J_zx,x \right>=0\) (see e.g. \cite{BLU}), we see  that    \(B_{is} \le \vert J_{e_{m+s}} x \vert  = \vert x \vert \)   and \(\vert A \vert ^2 = \vert x \vert ^4 \vert x \vert ^2+ 16 \vert x \vert ^2 \vert z \vert ^2= N^4 \vert x \vert^2.   \) Using  Young's inequality, we thus arrive at the estimate
\begin{align*}\left \vert (\textup{Hess}N)^*_{ij} \right \vert &\le \frac{1}{N^7} \left( N^4 \vert x \vert^2  \delta_{ij}+2N^4\left(  \frac{x_i^2+x_j^2}{2} +r \vert x \vert ^2\right)+3 \vert A \vert ^2\right)  
\\ & \le \frac{1}{N^7} \left( N^4 \vert x \vert^2  \delta_{ij}+2(1+r)N^4\vert x \vert ^2+3 N^4 \vert x \vert^2\right)  
\\ & \le \frac{\delta_{ij}+2r+5}{N}, \end{align*} 
where we used once again that $\vert x \vert  \le N$.
Summing over \(i, j\), we obtain 
\[ \sum_{i,j=1}^m \left \vert (\textup{Hess}N)^*_{ij} \right\vert  \le \frac{m+2rm^{2}+5m^{2}}{N}.\]
Let us also observe that  
\begin{align*}
DN= \sum_{i=1}^m x_i \partial_{i}N+ 2\sum_{i=1}^rz_i \partial_{m+i}N= N^{-3}\left( \vert x \vert ^4+16 \vert z \vert ^2\right)=N.
\end{align*}
Combining the above estimates and using the Cauchy-Schwarz inequality,  we conclude that for the operator \(L := \sum_{i,j=1}^{m}(\delta_{ij}+G_{ij}) Z_{i}Z_{j} +\sum _{i=1}^n p_{i}Z_{i}- \delta D \) we have   
\begin{align*} L N&= \sum_{i,j=1}^m (\delta_{ij}+G_{ij}) (Z_{i}Z_{j} N) +\sum _{i=1}^n p_{i}Z_{i}N-\delta DN
\\ & \le  \max_{i,j}(\delta_{ij}+G_{ij} ) \sum_{i,j=1}^m \left \vert (\textup{Hess} N)^*_{ij} \right\vert  +\sqrt{\sum_{i=1}^np_i^2}  \sqrt{\sum_{i=1}^n \vert Z_i N \vert ^2}-\delta DN
\\ & \le \frac{c_{1}}{N} +\vert p \vert \sqrt{1+ \frac{4}{N^{2}}}-\delta N,
\end{align*}
using the Cauchy-Schwarz inequality,    where \(c_1=\max_{i,j}(\delta_{ij}+G_{ij} )( m+2rm^{2}+5m^{2})\) and \( \vert p \vert^2 = \sum_{i=1}^n p_i^2\). Let\begin{align}\label{Wdef}W(w)=\sqrt{1+N(w)^2}.\end{align} Then \(W \ge 1 \) is a smooth function. We claim that there exist constants \(C_{1}\ge 0\) and \(C_2>0\) such that \begin{displaymath}
LW<C_1-C_2W.
\end{displaymath}Indeed, as L is a diffusion generator,
we have
\begin{align*}
LW&= \frac{N}{\sqrt{1+N^2}} L N  + \frac{1}{({1+N^2})^{3/2}}\sum_{i,j=1}^{m}(\delta_{ij}+G_{ij}) (Z_{i}N)(Z_{j} N)
\\ & \le \frac{c_{1}   +\vert p \vert \sqrt{N^2+4}-\delta N^2}{\sqrt{1+N^2}}+c_2 \\ & \le c_1+2\vert p \vert+\delta+c_{2}-\delta W,
\end{align*} 
with \(c_1 \) as above and  \begin{displaymath}
c_{2}=\max_{i,j}(\delta_{ij}+G_{ij}),
\end{displaymath}and using the elementary inequality \[-\frac{x^2}{\sqrt{1+x^2}} \le -\sqrt{1+x^2}+1.\]
Finally, by considering isomorphic copies of \(W \) at different points of the lattice and defining \(W_\Lambda = \sum_{i \in \Lambda}W_i\) we obtain a suitable Lyapunov function for \(\mathcal{L}_\Lambda\).

\section{Exponential Decay for Filiform Algebra with Full Dilation.}
\label{ExpDecay4FilliformAlgebra_FullDilation}
Let  $D$ and $\{Y_i\}_{i=0\dd N+1}$, be first order differential operators on $\R^m$ and set
\[
Y_i:=
\begin{cases}
B, \qquad \text{for } i=0 \nonumber\\
Z_{i-1}, \qquad \text{for } i=1,..,N+1 . \nonumber
\end{cases}
\]
Consider the following second order differential
operator on $\mathbb{R}^{m}$ 
\[
L=Y_{1}^{2}+Y_0 -\lambda D, \quad \lambda>0.
\]
 Assume
that for some $N\in\mathbb{N}$ , $N\geq1$, there exist $Y_{1},\dots,Y_{N+1}$
such that the following commutator relations hold true:
\begin{align*}
&[Y_0,Y_{j}]=Y_{j+1}\qquad j=1,\dots,N,
\qquad 
[Y_0,Y_{N+1}]=0, \\
&  [Y_{i},Y_{j}]=0 \qquad  1\leq i,j\leq N+1 \qquad \mbox{and}\\
& [Y_i,D]=\kappa_i Y_i,\qquad  i=0,\dots ,N+1\,,
\end{align*}
for some $\kappa_i\in(0,\infty)$.
For convenience of notation we will set $Y_i\equiv 0$ for $i>N+1$.
Hence we have
\begin{equation} \label{CR_L}
[Y_i,L]=
2 \, \delta_{i,0}
Y_1Y_2  
-(1-\delta_{i,0})Y_{i+1}-\lambda\kappa_i Y_i.
\end{equation}
For $n\in\mathbb{N}$, let
\[\Y_{\kk,n}\equiv Y_{k_1}...Y_{k_n}\]
where $\kk\equiv(k_1,...,k_n)$,  $k_i\in\mathbb{N}\cup\{0\}$. If $\e_i$ is the $i$-th vector of the standard basis of $\R^n$, 
we have
\begin{align*}
[\Y_{\kk,n}, L]=& 
-\lambda \,\kappa_\kk\Y_{\kk,n} - \sum_{\kk}\sum_{i=1}^n (1-\delta_{k_i,0})\Y_{\kk+\e_i}
\\
&+\sum_{j=1}^n\sum_\kk  \delta_{k_j,0} Y_{k_1}{\dots}  Y_{k_{j-1}} Y_{1}Y_{2} {\dots} Y_{k_{j+1}}Y_{k_n}\, ,
\end{align*}
with the convention that in the last sum for $i=1$ and $i=n$ there are no factors on the left and right of $Y_1Y_2$, respectively. 
As before, for a smooth
bounded function $f$, we set $f_t\equiv e^{tL}f$, $t\in(0,\infty)$.
Let
$m_{j}\in(0,\infty)$,
$j\in\mathbb{N}$, be such that 
\begin{equation} \label{nus}
m_0=m_2,\, m_1=1\, \text{ and }\,  m_j\geq m_{j+1},  \text{ for }  j\geq 1 .
\end{equation}
We also set $m_{\kk}\equiv \sum_j m_{k_j}$ and $\kappa_{\kk}\equiv \sum_j \kappa_{k_j}$ and define 
\begin{equation}
\widetilde\Gamma_{t}^{(n)}(f_t)\equiv  \sum_{\kk}  e^{m_{\kk}t}|\Y_{\kk ,n}f_t|^{2}\,.
\label{Aii.0} 
\end{equation}

\begin{theorem}\label{pro_Aii.1}  
With the notation introduced in this section, for any $n\in\mathbb{N}$, there exist $\lambda\in(0,\infty)$ such that
\be\label{gtilde}
\widetilde\Gamma_{t}^{(n)}(f_t) \leq \widetilde\Gamma_{0}^{(n)}(f), \qquad \mbox{for all } t>0.
\ee
Therefore, for all $\kk, n$ and $t>0$ we have
$$
\lv \Y_{\kk,n} f_t\rv^2\leq e^{-m_{\kk}t} \,\widetilde\Gamma_{0}^{(n)}(f).
$$
\end{theorem}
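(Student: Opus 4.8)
## Proof proposal

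The plan is to show that the time-dependent quadratic form $\widetilde\Gamma_t^{(n)}(f_t)$ is non-increasing in $t$ by differentiating and verifying that $(-L + \partial_t)\widetilde\Gamma_t^{(n)}(f_t) \le 0$ for a suitable choice of $\lambda$; once this is established, $\widetilde\Gamma_t^{(n)}(f_t) \le \widetilde\Gamma_0^{(n)}(f)$ follows immediately, and the pointwise bound on $|\Y_{\kk,n}f_t|^2$ follows by dropping all other (nonnegative) terms from the sum defining $\widetilde\Gamma_t^{(n)}$. So the whole theorem reduces to the differential inequality.

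First I would compute $\partial_t \widetilde\Gamma_t^{(n)}(f_t)$ using $\partial_t f_t = L f_t$ and the product rule, and then apply the Bakry--Émery / $\Gamma_2$-type identity (\cite[Lemma 2.2]{MV_I}, as used in the Langevin section) to $L\widetilde\Gamma_t^{(n)}$. This produces three kinds of terms: (i) the "good" dissipative term $-2\sum_{\kk} e^{m_\kk t}|Y_1 \Y_{\kk,n} f_t|^2$ coming from the carré du champ of the single field $Y_1$ generating $L$'s second-order part; (ii) the "weight-derivative" terms $\sum_\kk m_\kk e^{m_\kk t}|\Y_{\kk,n}f_t|^2$; and (iii) the commutator terms $2\sum_\kk e^{m_\kk t} (\Y_{\kk,n}f_t)([\Y_{\kk,n},L]f_t)$, which I expand using the formula for $[\Y_{\kk,n},L]$ displayed just before the theorem. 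The commutator formula has three contributions: a diagonal term $-\lambda\kappa_\kk \Y_{\kk,n}f_t$, raising terms $-(1-\delta_{k_i,0})\Y_{\kk+\e_i}f_t$ that shift a multi-index $\kk$ to $\kk+\e_i$, and the "$Y_1Y_2$-insertion" terms that replace a $Y_0$ factor by the pair $Y_1Y_2$.

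The heart of the argument is to absorb the off-diagonal commutator terms. For the raising terms $\Y_{\kk+\e_i}f_t$, I would use the quadratic Young inequality to bound each cross term $(\Y_{\kk,n}f_t)(\Y_{\kk+\e_i,n}f_t)$ by a small multiple of $|\Y_{\kk+\e_i,n}f_t|^2$ plus a controlled multiple of $|\Y_{\kk,n}f_t|^2$; here the monotonicity condition \eqref{nus} on the weights $m_j$ is exactly what makes the bookkeeping close — raising $k_i$ does not increase $m_\kk$ by more than a fixed amount, and $e^{(m_\kk - m_{\kk+\e_i})t}$ stays bounded on the relevant range. The $Y_1Y_2$-insertion terms (which involve a second-order field $Y_1Y_2$ applied to $f_t$, i.e. genuinely of the same order as the good term) must be matched against the dissipative term (i) by Young's inequality, exploiting $[Y_1,Y_2]=0$ so that $Y_1Y_2\Y_{\kk,n}f_t$ can be compared with $Y_1\Y_{\kk',n}f_t$ for appropriate $\kk'$; the conditions $m_0=m_2$ and $m_1=1$ are tailored so the exponential weights on both sides of this Young inequality agree. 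Finally, the diagonal term contributes $-\lambda\kappa_\kk$, which is strictly negative and can be made to dominate all the leftover positive contributions of the form $|\Y_{\kk,n}f_t|^2$ provided $\lambda$ is chosen large enough (note $\kappa_\kk \ge \kappa_{\min} n > 0$ on the relevant multi-indices), while the positive term $m_\kk e^{m_\kk t}|\Y_{\kk,n}f_t|^2$ from (ii) is also killed once $\lambda$ beats the (bounded) weights $m_j$.

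The main obstacle I anticipate is the simultaneous bookkeeping of the exponential weights across the index-shifting operations: the inequality $(-L+\partial_t)\widetilde\Gamma_t^{(n)}(f_t)\le 0$ must hold for \emph{all} $t>0$, not just small $t$, so one cannot afford any factor of $t$ or $e^{ct}$ with the wrong sign surviving — this is why the precise relations \eqref{nus} (rather than an arbitrary decreasing sequence) are imposed, and verifying that they make every Young-inequality step weight-consistent is the delicate part. A secondary technical point is justifying the interchange of $L$ (an infinite sum over the lattice, in the $\lla$ version) with the infinite sum over multi-indices $\kk$ defining $\widetilde\Gamma_t^{(n)}$; I would handle this exactly as in \cite{MV_I}, first working with finite truncations $\Lambda \uparrow \ZZ^d$ and finite sums over $\kk$ and then passing to the limit using the uniform bounds, though for the purely finite-dimensional statement of this theorem (operator $L$ on $\R^m$) this issue does not arise and only the multi-index sum needs a mild convergence check.
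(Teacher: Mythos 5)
Your proposal follows essentially the same route as the paper's proof: compute $(-L+\partial_t)\widetilde\Gamma_t^{(n)}(f_t)$ via \cite[Lemma 2.2]{MV_I}, expand $[\Y_{\kk,n},L]$ into diagonal, raising and $Y_1Y_2$-insertion parts, absorb the insertion terms into the dissipative term $-2\sum_\kk e^{m_\kk t}|Y_1\Y_{\kk,n}f_t|^2$ after moving $Y_1$ to the front (where the full filiform relations $[Y_0,Y_1]=Y_2$, not merely $[Y_1,Y_2]=0$, produce the lower-order corrections, and $m_0=m_2$ matches the weights), control the raising terms by Young's inequality and the conditions \eqref{nus}, and finally take $\lambda$ large so that $-2\lambda\kappa_\kk$ dominates $m_\kk+2n+n(n-1)(n-2)$. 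This is exactly the argument given in the paper, so nothing further is needed.
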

\begin{proof} Given $n\in\NN$ and simplifying notation $\Y_\kk\equiv\Y_{\kk,n}$,
we have
\begin{align*}
 (-L+\partial_{t})\widetilde\Gamma_{t}^{(n)}(f)  =&   -2\sum_\kk e^{m_\kk t}|Y_{1}\Y_\kk f_t|^{2} 
+2\sum_\kk e^{m_\kk t} \Y_\kk f_t \cdot [\Y_\kk ,L] f_t    
 +\sum_\kk m_\kk  e^{m_\kk t}|\Y_\kk f_t|^{2} \\ 
 =&-2\sum_\kk e^{m_\kk t}|Y_{1}\Y_\kk f_t|^{2}   \nonumber +\sum_\kk (m_\kk - 2\lambda \kappa_\kk) e^{m_\kk t}|\Y_\kk f_t|^{2} \\ & -2\sum_{i=1}^n\sum_\kk (1-\delta_{k_i,0}) e^{m_\kk t} \Y_\kk f_t\cdot
\Y_{\kk +\e_i}f_t \label{Aii.l3}  \nonumber \\ 
& -4\sum_{i=1}^n\sum_\kk  \delta_{k_i,0}  e^{m_\kk t} \Y_\kk f_t\cdot (Y_{k_1}..Y_{k_{i-1}} Y_{1}Y_{2} Y_{k_{i+1}}..Y_{k_n}f_t ),
\end{align*}
recalling the  convention that in the last sum for $i=1$ and $i=n$ there is no factors on the left and right of $Y_1 Y_2$, respectively.
Since
\begin{align*}
Y_{k_1}\dots Y_{k_{i-1}} Y_{1}Y_{2} Y_{k_{i+1}}\dots Y_{k_n} =&\;
Y_{1}Y_{k_1}\dots Y_{k_{i-1}} Y_{2} Y_{k_{i+1}}\dots Y_{k_n} \\&+ \sum_{l=1}^{i-1}
Y_{k_1}\dots[Y_{k_l},Y_1]\dots Y_{k_{i-1}} Y_{2} Y_{k_{i+1}}\dots Y_{k_n},
\end{align*}
with the first term of order $n+1$ and each term in the sum of order $n$,
taking into the account that $m_0=m_2$, we have the following bound
\begin{align*}
& 4\lv \sum_{i=1}^n\sum_\kk  \delta_{k_i,0}  e^{m_\kk t} \Y_\kk f_t\cdot (Y_{k_1}..Y_{k_{i-1}} Y_{1}Y_{2} Y_{k_{i+1}}..Y_{k_n}f_t) \rv \\
& \qquad  \leq 
2\sum_\kk e^{m_\kk t} |Y_{1}\Y_\kk f_t|^{2} +
n(n-1)(n-2) \sum_\kk e^{m_\kk t} |\Y_\kk f_t|^{2}.    
\end{align*}
We also note that, because $m_{\kk}\leq m_{\kk+\e_i}$ whenever $k_i\neq N+1$,
we have
\begin{equation}
2\left|\sum_{i=1}^n\sum_\kk (1-\delta_{k_i,0}) e^{m_\kk t} \Y_\kk f_t\cdot
\Y_{\kk +\e_i}f_t \right| \leq 2 n \sum_\kk  e^{m_\kk t} |\Y_\kk f_t|^2 .\nonumber
\end{equation}
Hence we get
\begin{equation*}
  (-L+\partial_{t})\widetilde\Gamma_{t}^{(n)}(f) \leq 
\sum_\kk (m_\kk - 2\lambda \kappa_\kk+2n + n(n-1)(n-2)) e^{m_\kk t}|\Y_\kk f_t|^{2}.
\end{equation*}
That is, if
\[ \max_\kk \left( m_\kk - 2\lambda \kappa_\kk+2n + n(n-1)(n-2) \right) \leq 0
\]
we obtain
\[   (-L+\partial_{t})\widetilde\Gamma_{t}^{(n)}(f_t) \leq 0.
\]
\end{proof}

\newcommand{\fpps}{\frac{\partial}{\partial s}}
\newcommand{\cE}{\mathcal{E}}

\section{Filiform Algebras with Partial Dilation.}
\label{Filliform Algebras Partial Dilation}
Consider the following fields in $\mathbb{R}^n$
\begin{align*}
Z_0\equiv \partial_1, \; B\equiv \partial_2 + x_1\partial_3+x_3\partial_4+..+x_{n-1}\partial_n,
 \end{align*}
 For \(j=0,\dots, n-1,\) we set 
 \begin{align*}
 Z_{j+1} \equiv  [B,Z_j]= (-1)^{j+1}\partial_{j+1},
 \end{align*}
 and note that  \begin{align*}
[B,Z_n]=0 \text{ and } [Z_i,Z_j] = 0 \text{ for all } i,j=0,\dots,n .\end{align*}
 Next we introduce a partial dilation
 \begin{align*}
& D_0\equiv x_1\partial_1.
\end{align*}
Observe that \begin{align*}
 [Z_0,D_0]&=Z_0,\\ [Z_i,D_0]&= 0, \text{ for } i\neq 0\\
[B,D_0]& = -x_1\partial_3,
\\ \underbrace{[..[B,D_0],..,D_0]}_{n-times}&= (-1)^n x_1\partial_3.
 \end{align*}
We consider the  generator
given by \[
 L\equiv Z_0^2 +B-\lambda D_0
 \]
 with $\lambda \in [0,\infty)$. Let $f_t\equiv P_tf\equiv e^{tL}f$.
We note that  the following  vector
\[V\equiv \sum_{j=0}^n \left(\lambda^{n-j}\chi_{\{\lambda< 1\}}+\frac1n\chi_{\{\lambda = 1\}}
+ \lambda^{j-n}\chi_{\{\lambda > 1\}}\right) Z_j\] 
 satisfies
 \[ [L,V]=[B-\lambda D_0,V]= \lambda V.
 \]
Define
\be \label{8F_generator}
\mathcal{L} \equiv \sum_{\rw\in\ZZ^d}   L_\rw  
 +\sum_{\rw\in\ZZ^d} q_{\rw}  V_{\rw} 
\ee 
 with $L_\rw$ and $V_{\rw} $ denoting the isomorphic copy of $L$ and $V$, respectively,
with $q_{\rw} $ being bounded smooth functions with bounded derivatives dependent on many variables.
Since
 \begin{align*}
 2\sum_{\rw\in\ZZ^d} V_{\rw}f_s \cdot [V_{\rw}, \mathcal{L} ]f_s=& -2\lambda \sum_{\rw\in\ZZ^d} |V_{\rw}f_s|^2
 +  2\sum_{\rw\in\ZZ^d}  V_{\rw}f_s \cdot \sum_{\ry\in\ZZ^d} 
 (V_{\rw} q_{\ry} )V_{\ry}  f_s \\
 \leq& \;2(\eta - \lambda ) \sum_{\rw\in\ZZ^d} |V_{\rw}f_s|^2,
 \end{align*}
 with 
 \[ \eta \equiv \frac12\sup_{\rw\in\ZZ^d}\sum_{\ry\in\ZZ^d} \left(\| V_{\rw} q_{\ry} \|_\infty+ \|V_{\ry} q_{\rw} \|_\infty\right),
 \]
 we have
 \[\sum_{\rw\in\ZZ^d} |V_{\rw}f_t|^2 \leq e^{2(\eta-\lambda)t} P_t \sum_{\rw\in\ZZ^d} |V_{\rw}f|^2.\]
 By similar arguments one has
 \[\sum_{\rw_1,..\rw_k\in\ZZ^d} \left| V_{\rw_1}..V_{\rw_k}f_t\right|^2 \leq e^{2k(\eta-\lambda)t} P_t \sum_{\rw_1,..\rw_k\in\ZZ^d} \left|V_{\rw_1}..V_{\rw_k}f\right|^2.
 \] 
 The short time estimates can be done as in a similar way as in other examples.
 \section{Heisenberg model with partial dilation.}
  \label{H_model_PartialDilation} 
In this section we provide an explicit example of dissipative dynamics with partial dilation which forces exponential concentration along a suitable vector field.  Let
\begin{align*}
Z_0&\equiv X\equiv\partial_x+\frac12y\partial_z,\\  B&\equiv Y\equiv \partial_y -\frac12x\partial_z, \\
Z_1 &\equiv [B,Z_0]= \partial_z,
\end{align*}
so that \(
[B,Z_1]=[Z_0,Z_1]=0\).
 Next, we introduce a partial dilation
\[
  D_0\equiv x \partial_x\]
such that \(
  [\partial_x,D_0]=\partial_x\) and \(  [Z_1,D_0]= 0\),
 and the following generator
 \[
 L\equiv X^2 +\xi Y-\lambda D_0,
 \]
 with $\lambda \in [0,\infty)$. Let $f_t\equiv P_tf\equiv e^{tL}f$.
 We note that
\[
[\partial_x, L]= -\frac\xi2 \partial_z -\lambda \partial_x 
\]
and \(  [Z_1,L]=0\).
Thus, setting  \[V \equiv \kappa  \partial_x  +
 \kappa^{-1}\partial_z \]
  with $\kappa\equiv \sqrt{\frac{2\lambda}{\xi}}$, 
   we have
 \[ [V,L] = -\lambda V.\]
 This model has no invariant probability measure, but it has the following concentration property. 
 
 \begin{theorem}
 For any $\lambda, \xi > 0$, one has
 \be  
 \left(V f_t \right)^2
  \leq e^{ -2 \lambda t} P_t\left(V f  \right)^2
 \ee
 for any $f$ for which the right hand side is well defined.
  \end{theorem}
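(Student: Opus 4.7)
The plan is to prove the estimate by a standard interpolation-type argument along the semigroup, exploiting the commutator identity $[V,L]=-\lambda V$ derived just before the theorem and the fact that $L$ is a diffusion generator with carré du champ $\Gamma(f,g)=(Xf)(Xg)$.

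First I would introduce the auxiliary functional
\[
\psi(s)\equiv e^{2\lambda s}P_{t-s}\bigl((Vf_s)^2\bigr),\qquad s\in[0,t],
\]
where $f_s=P_sf$. Then $\psi(0)=P_t((Vf)^2)$ and $\psi(t)=e^{2\lambda t}(Vf_t)^2$, so the desired bound is equivalent to showing that $\psi$ is non-increasing on $[0,t]$.

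Second, I would differentiate $\psi$ in $s$. Since $\partial_s f_s=Lf_s$ and $[V,L]=-\lambda V$, one gets
\[
\partial_s(Vf_s)^2=2(Vf_s)(VLf_s)=2(Vf_s)L(Vf_s)-2\lambda(Vf_s)^2.
\]
On the other hand, because $L=X^2+\xi Y-\lambda D_0$ is of diffusion type with quadratic form concentrated on $X$, one has the identity
\[
L\bigl((Vf_s)^2\bigr)=2(Vf_s)L(Vf_s)+2(XVf_s)^2.
\]
Using $\partial_s P_{t-s}\phi_s=P_{t-s}(\partial_s\phi_s-L\phi_s)$ with $\phi_s=(Vf_s)^2$, the two contributions $2(Vf_s)L(Vf_s)$ cancel, and a further cancellation occurs between $-2\lambda(Vf_s)^2$ and the explicit derivative of the factor $e^{2\lambda s}$. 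What survives is
\[
\partial_s\psi(s)=-2e^{2\lambda s}P_{t-s}\bigl((XVf_s)^2\bigr)\leq 0,
\]
since $P_{t-s}$ is positivity-preserving.

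Third, integrating this inequality over $[0,t]$ yields $\psi(t)\leq\psi(0)$, which is exactly the claimed bound. The only delicate point is a regularity/integrability check: one needs that $Vf_s$, $XVf_s$, and the relevant $P_{t-s}$-integrals are meaningful along the trajectory. This is precisely what the hypothesis ``for any $f$ for which the right-hand side is well defined'' is meant to encode, and in practice one would argue first on a dense class of smooth $f$ with suitable decay (where both sides are finite) and then extend by density/monotone convergence. I do not anticipate any genuine obstacle beyond this routine justification, because the algebraic core of the argument is the clean eigen-relation $[V,L]=-\lambda V$ together with the carré-du-champ identity for $L$.
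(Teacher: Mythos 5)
Your argument is correct and is exactly the interpolation-along-the-semigroup computation the paper uses (it is written out explicitly only for the infinite-dimensional version, where $\partial_s e^{ms}P_{t-s}\sum_j|V_jf_s|^2$ is shown to be nonpositive, but the finite-dimensional case is the same calculation with $m=2\lambda$). The key ingredients — the eigen-relation $[V,L]=-\lambda V$, the carré-du-champ identity $L(g^2)=2gLg+2(Xg)^2$, and positivity of $P_{t-s}$ — match the paper's reasoning, so there is nothing to add beyond the routine regularity caveat you already note.
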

 
\noindent{\bf  Partial Concentration for infinite dimensional model.}\\
   \label{Concentration4infdim}
   
\noindent   We consider the following model

   \[\mathcal{L} \equiv \sum_i \mathcal{L}_i +\sum_i \left( q_i V_i  + \gamma_iX_i + \eta_i Z_i \right)\]
  with summation over $i\in\mathbb{Z}^d$, 
  \[\mathcal{L}_i \equiv X_i^2 +\xi Y_i -\lambda D_i^{(0)}\]
  where $V_i,X_i, Y_i, Z_i, D_i^{(0)}$ are copies of operators 
  $V,X,Y,Z, D_0$ defined above, and $\xi,\lambda$ are positive constants and$q_i, \gamma_i , \eta_i$ are smooth functions for which the infinite dimensional semigroup is well defined on the graph $\mathbb{Z}^d$.
We have the following result.
  \begin{theorem}
 Suppose $\gamma_i , \eta_i$ depend only on $y_j$ , $j\in\mathbb{Z}^d$.
 For any $\lambda, \xi > 0$, if
 \[0\leq m\leq 2 \lambda - \sup_k\sum_{j} \left(\left\|V_j q_k\right\|_\infty + \left\|V_kq_j\right\|_\infty \right) , 
 \]
 then
 \be  
 \sum_j |V_jP_t f|^2 
  \leq e^{ - m t} P_t\sum_j |V_jf|^2 
 \ee
 for any $f$ for which the right hand side is well defined.
  \end{theorem}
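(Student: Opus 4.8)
The plan is to differentiate the quantity $e^{ms}P_{t-s}\sum_j |V_j f_s|^2$ in $s$, exactly as was done for the earlier long-time results in Section \ref{LongTimeBehaviour} and for the filiform-with-partial-dilation example, and to show that under the stated smallness assumption on $m$ the $s$-derivative is nonpositive; integrating from $s=0$ to $s=t$ then yields the claimed inequality. Here $f_s := e^{s\mathcal{L}}f$.

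First I would record the relevant commutators. Since $V = \kappa\,\partial_x + \kappa^{-1}\partial_z$ in each copy and $[V_i,\mathcal{L}_i] = -\lambda V_i$ from the finite-dimensional computation, the only additional contributions to $[V_k,\mathcal{L}]$ come from the interaction terms $\sum_i(q_i V_i + \gamma_i X_i + \eta_i Z_i)$. The term $q_i V_i$ contributes $\sum_i (V_k q_i) V_i$ (the operator part $q_i V_i$ commutes with $V_k$ since the $V$'s commute among themselves and with each other across sites, being constant-coefficient copies of commuting fields). The crucial point is to check that $\gamma_i X_i + \eta_i Z_i$ contributes nothing: because $\gamma_i,\eta_i$ depend only on the $y_j$ variables, and $V_k$ only differentiates in the $x_k$ and $z_k$ directions, we have $V_k\gamma_i = V_k\eta_i = 0$; moreover $X_i,Z_i$ commute with $V_k$. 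Hence
\[
[V_k,\mathcal{L}] = -\lambda V_k + \sum_j (V_k q_j) V_j .
\]

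Next I would apply the product/derivative identity (the analogue of \cite[Lemma 2.2]{MV_I}) to obtain
\begin{align*}
\partial_s\, e^{ms} P_{t-s}\sum_j |V_j f_s|^2
= \; & m\, e^{ms} P_{t-s}\sum_j |V_j f_s|^2
- 2 e^{ms} P_{t-s}\sum_{j}\sum_i |X_i V_j f_s|^2 \\
& + 2 e^{ms} P_{t-s}\sum_j V_j f_s\cdot [V_j,\mathcal{L}]f_s,
\end{align*}
where the $-2\sum_i|X_i V_j f_s|^2$ term is the dissipative contribution coming from the second-order part $\sum_i X_i^2$ of $\mathcal{L}$ (this term is $\le 0$ and can simply be discarded). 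Substituting the commutator, the last line becomes $-2\lambda\sum_j|V_jf_s|^2 + 2\sum_{j,k} V_jf_s\cdot(V_jq_k)V_kf_s$, and the double sum is bounded in absolute value by $\big(\sup_k\sum_j(\|V_jq_k\|_\infty+\|V_kq_j\|_\infty)\big)\sum_j|V_jf_s|^2$ by a symmetrized Cauchy–Schwarz/Young estimate (exactly as for the $\eta$ term in Theorem \ref{thm23}). Collecting terms, the coefficient in front of $P_{t-s}\sum_j|V_jf_s|^2$ is at most $m - 2\lambda + \sup_k\sum_j(\|V_jq_k\|_\infty+\|V_kq_j\|_\infty)$, which is $\le 0$ precisely by hypothesis. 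Therefore $\partial_s\big(e^{ms}P_{t-s}\sum_j|V_jf_s|^2\big)\le 0$, and integrating over $s\in[0,t]$ gives $\sum_j|V_jP_tf|^2 \le e^{-mt}P_t\sum_j|V_jf|^2$.

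The main obstacle is not any single hard estimate but making sure the bookkeeping is airtight: verifying that the $\gamma_i X_i$ and $\eta_i Z_i$ terms genuinely drop out of $[V_k,\mathcal{L}]$ (this is exactly where the hypothesis "$\gamma_i,\eta_i$ depend only on the $y_j$" is used), and justifying the differentiation-under-the-semigroup and the integration in $s$ at the level of rigor required in infinite dimensions — i.e. that $f_s$ is smooth enough and the relevant sums converge, which is where one invokes the well-posedness and the a priori bounds from \cite{MV_I} together with the standing assumption that $q_i,\gamma_i,\eta_i$ are smooth with the semigroup well defined on $\mathbb{Z}^d$. Once these points are in place, the computation is a direct transcription of the one preceding Theorem \ref{thm23}.
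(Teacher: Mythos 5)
Your proposal is correct and follows essentially the same route as the paper: differentiate $e^{ms}P_{t-s}\sum_j|V_jf_s|^2$ in $s$, use $[V_k,\mathcal{L}]=-\lambda V_k+\sum_j(V_kq_j)V_j$ (the $\gamma_iX_i$, $\eta_iZ_i$ terms dropping out because $\gamma_i,\eta_i$ depend only on the $y$ variables), discard the dissipative $-2\sum_{j,k}|X_kV_jf_s|^2$ term, bound the cross term by the symmetrized supremum, and integrate in $s$. Your explicit verification of why the $\gamma_i,\eta_i$ terms vanish is exactly the point the paper leaves implicit, so no gap remains.
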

  
  \begin{proof}
  We have
\begin{align*}
 \partial_s e^{ms}P_{t-s} \sum_j |V_jf_s|^2 
 =& -2 e^{ms}P_{t-s} \sum_{j,k} |X_kV_jf_s|^2\\
&  +e^{ms}P_{t-s} (m - 2 \lambda  ) \sum_{j}  | V_jf_s|^2  
 -2   e^{ms}P_{t-s} \sum_{j,k} \left(V_j q_k\right) V_jf_s\cdot  V_kf_s\\
 \end{align*}
 and so
\[ \partial_s e^{ms}P_{t-s} \sum_j |V_jf_s|^2 
 \leq 
 - C e^{ms}P_{t-s}\sum_{j}  | V_jf_s|^2  
 \]
 with
 \[ C \equiv  2 \lambda - m - \sup_k\sum_{j} \left(\left\|V_j q_k\right\|_\infty + \left\|V_kq_j\right\|_\infty \right)\]
 If $C \geq 0$, the statement follows.
  \end{proof}


\section{B-S Model with Interaction.} \label{B-S Model}  
 We consider the model in $\mathbb{R}$ with the following generator \[
 L= x^2\partial_x^2 +\varepsilon\partial_x -\lambda x\pa_x\, , \qquad \lambda\geq 0,\,  \varepsilon\in\mathbb{R}.
\]
Setting $Z_0:= x\pa_x$ we note that
\[x^2\partial_x^2 = (x\partial_x)^2 -x\partial_x \equiv Z_0^2- Z_0 , 
\] 
so the generator can be rewritten as 
\[L=Z_0^2 +B -(\lambda+1) Z_0  , \]
where  $B:=\varepsilon\partial_x$.   Also, 
\[Z_1:= [B,Z_0]=[L,Z_0]=\varepsilon\partial_x=B.  \]
We note that for $\langle x\rangle\equiv \left(1+x^2\right)^{\frac12}$, we have \label{BS.Compact_findim}
\[ L \langle x\rangle \leq  \varepsilon 
- \lambda  \langle x\rangle ,  \]
hence for $\lambda>0$, the semigroup $e^{tL}$ has an invariant probability measure. 
We  also have the following gradient type bound\begin{versionb} \label{expDec_finDim}
\begin{eqnarray*}
&\partial_sP_{t-s}\left(e^{ms}|\partial_xf_s|^2\right)=\\
&P_{t-s}\left(-2e^{ms}|Z_0\partial_xf_s|^2 +2e^{ms}\partial_xf_s\cdot( \{Z_0,\partial_x\})f_s -2(\lambda+1)e^{ms}|\partial_xf_s|^2+m e^{ms}|\partial_xf_s|^2 \right)\\
&=P_{t-s}\left(-2e^{ms}|Z_0\partial_xf_s|^2 +2e^{ms}\partial_xf_s\cdot 2 Z_0\partial_x f_s  -(2 \lambda -m) e^{ms}|\partial_xf_s|^2 \right)\\
&=P_{t-s}\left(-2e^{ms}\left(Z_0\partial_xf_s -\partial_xf_s\right)^2    -(2 (\lambda -1)-m) e^{ms}|\partial_xf_s|^2 \right)\\
\end{eqnarray*}
and hence
\end{versionb}
\begin{equation*}
 \partial_sP_{t-s}\left( |\partial_xf_s|^2\right) 
 =P_{t-s}\left(-2 \left(Z_0\partial_xf_s -\partial_xf_s\right)^2    - 2 (\lambda -1)  |\partial_xf_s|^2 \right),\\
\end{equation*}
\noindent
which in turn implies 
\[|\partial_xf_t|^2\leq e^{-2(\lambda-1)t}P_t|\partial_xf |^2.
\]
More generally, for any $n\in\mathbb{N}$ we have 
\begin{versiona}
\begin{eqnarray*}
\partial_sP_{t-s}\left( |\partial_x^nf_s|^2\right) 
&=P_{t-s}\left(-2  |Z_0\partial_x^n f_s|^2  
 +2 \partial_x^nf_s \left(2nZ_0\partial_x^n f_s - n \lambda\partial_x^n f_s \right)  \right)\\
&= P_{t-s}\left(-2  (Z_0\partial_x^n f_s - n\partial_x^nf_s)^2
 - 2n(\lambda-n)|\partial_x^nf_s|^2  \right)
\end{eqnarray*}
from which we deduce
\end{versiona}
\begin{versiona}
\begin{equation*}
  |\partial_x^nf_t|^2 \leq
 e^{-2n(\lambda-n)t}P_t|\partial_x^n f |^2 . 
\end{equation*}
\end{versiona}
\noindent
In particular, if $\lambda \in(1,\infty)$ we have pointwise decay to equilibrium for differentiable functions.

\noindent
To study the smoothing estimates we  recall the notation
$$
\{X,Y\}= XY+YX=2XY-[X,Y],
$$
for any two given  differential operators $X$ and $Y$, and  we
and introduce the Lyapunov functional
$$
\Gamma_t (f_t)= at|Z_0f_t|^2+bt^3|Z_1 f_t|^2+ cs^2Z_0f_t\cdot Z_1  f_t +d f_t^2,
$$
for some positive real constants $a,b,c$ and $d$. Using 
\[
[Z_1, L]=\{Z_0, Z_1\}- (\lambda+1)Z_1= 2Z_0Z_1-\lambda Z_1
\]
and \cite[Lemma 2.2]{MV_I}, we  have 
\begin{align*}
\partial_sP_{t-s}\left( \Gamma_s f_s \right)=&P_{t-s}\left[(-L+\pa_s)\Gamma_s f_s \right]\\
=&P_{t-s}\left(-2as|Z_0^2 f_s|^2 -2bs^3|Z_0Z_1  f_s|^2  
-2cs^2Z_0^2f_s\cdot Z_0Z_1  f_s -2d |Z_0f_s|^2 \right)\\
&+P_{t-s}\left(  - 2as Z_0 f_s\cdot  Z_1 f_s 
+2 bs^3 Z_1 f_s\cdot (\{Z_0, Z_1 \} -(\lambda+1)Z_1 )f_s \right)\\
&+P_{t-s}\left( -  cs^2 |Z_1 f_s|^2 +  cs^2 Z_0f_s\cdot  (\{Z_0,Z_1 \} -
(\lambda+1)\partial_x)f_s )
\right)\\
&+P_{t-s}\left( a |Z_0f_s|^2+3bs^2|Z_1  f_s|^2+ 2cs Z_0f_s
\cdot Z_1  f_s     \right).
\end{align*}
Therefore, by repeatedly using Young's inequality, we obtain
\begin{align*}  
\partial_sP_{t-s} \left( \Gamma_s f_s \right) \leq & P_{t-s}\left(-2as|Z_0^2 f_s|^2 -2bs^3|Z_0Z_1  f_s|^2  
-2cs^2Z_0^2f_s\cdot Z_0Z_1  f_s \right)\\
&+P_{t-s}\left(  -2as Z_0 f_s\cdot Z_1 f_s 
+2 bs^3 Z_1 f_s\cdot ( 2 Z_0 Z_1  - \lambda Z_1 )f_s \right)\\
&+P_{t-s}\left( - cs^2 |Z_1  f_s|^2 +  cs^2 Z_0f_s\cdot  ( 2 Z_0 Z_1   - \lambda Z_1 )f_s )
\right)\\
&+P_{t-s}\left( (a+\delta^{-1}c^2-2d) |Z_0f_s|^2+(3b+\delta)s^2|Z_1 f_s|^2   \right) \\ 
\leq &P_{t-s}\left(-2as|Z_0^2 f_s|^2 -2bs^3|Z_0Z_1  f_s|^2  
-2cs^2Z_0^2f_s\cdot Z_0Z_1  f_s \right)\\
&+P_{t-s}\left(  4\frac{a}{c}  |Z_0 f_s|^2 
+ \frac12 cs^2 |Z_1  f_s |^2
  + bs^3 |Z_0 Z_1 f_s|^2  - 4\lambda bs^3 |Z_1  f_s|^2\right)\\
&+P_{t-s}\bigg( -  cs^2 |Z_1  f_s|^2 
+  \frac{2c^2}{ b} s  |Z_0f_s|^2 +  \frac12 b s^3 |Z_0 Z_1  f_s|^2
\\ & \qquad \qquad + \frac12 \lambda \delta^{-1}c^2s  |Z_0f_s|^2 + \lambda \frac12\delta s^3 |Z_1  f_s|^2 
\bigg)\\
&+P_{t-s}\left( (a+\delta^{-1}c^2-2d) |Z_0f_s|^2+(3b+\delta)s^2|Z_1  f_s|^2   \right).
\end{align*}
Putting everything together we have 
\begin{align}
\pa_s P_{t-s}\left(\Gamma_s f_s \right) \leq & P_{t-s}\left(-2as|Z_0^2 f_s|^2 -2 bs^3|Z_0Z_1  f_s|^2  
+c^2s^2\lv Z_0^2f_s \rv^2 + s^3 \lv Z_0Z_1  f_s \rv^2 \right)\nonumber\\
&+P_{t-s}\left(  
 ((3b+\delta-\frac12  c)s^2 - \lambda (4b-   \frac12\delta )s^3)|Z_1  f_s|^2
\right)\nonumber\\
&+P_{t-s}\left( (a+4\frac{a }{c} +\delta^{-1}c^2
+(\frac{2c^2}{ b}  
+ \frac12 \lambda \delta^{-1}c^2)s  - 2d) |Z_0f_s|^2 \right), \label{BS_1smooth}
\end{align}
with any small $\delta\in(0,\infty)$. From this we see that for sufficiently small $s$ the right hand side is non-positive provided $a$ and $d$ are chosen sufficiently large,
and $6b <  \varepsilon c$ and $c^2 \leq 2ab$. In this situation, given $\sigma\in(0,1)$ we can choose $a$ large enough so that  the following short time smoothing estimate holds
\begin{align*}
 \sigma a t|Z_0f_t|^2+\sigma bt^3|Z_1 f_t|^2 &\leq at|Z_0f_t|^2+bt^3|Z_1  f_t|^2+ ct^2Z_0f_t\cdot Z_1 f_t   
\\ &\leq 
d \left(  P_tf^2 - (P_tf)^2\right)
\end{align*}
for $0<t<t_0$ with some $t_0\in(0,1)$ sufficiently small.
Similar computations yield
\begin{align*}
\partial_sP_{t-s}\big( a |Z_0f_s|^2+b |Z_1  f_s|^2+ &c Z_0f_s\cdot Z_1  f_s +d f_s^2 \big)\\
\leq&  P_{t-s}\left(-2a |Z_0^2 f_s|^2 -\frac12 b |Z_0Z_1  f_s|^2  
-2c Z_0^2f_s\cdot Z_0Z_1  f_s \right)\\
&+P_{t-s}\left(  
 ((3b+\delta-\frac12  c)  - \lambda (4b-   \frac12\delta ) )|Z_1  f_s|^2
\right)\\
&+P_{t-s}\left( (a+4\frac{a}{c} +\delta^{-1}c^2
+(\frac{2c^2}{ b}  
+ \frac12 \lambda \delta^{-1}c^2)   -2d) |Z_0f_s|^2 \right).
\end{align*}
Thus, for sufficiently large $d,a\in(0,\infty)$,  $6b <  \varepsilon c$ and $c^2 \leq 2ab$, the right hand side of the above is non-positive and hence  for any $t\in(0,\infty)$
we have
\begin{align}\label{BS_2smooth}
\sigma a |Z_0f_t|^2+\sigma b |Z_1  f_t|^2  &\leq a |Z_0f_t|^2+b |Z_1  f_t|^2+ c Z_0f_t\cdot Z_1  f_t \\
&   \leq
P_t\left( a |Z_0f |^2+b |Z_1  f |^2+ c Z_0f \cdot Z_1  f\right)  + d \left(P_t f ^2 -  (P_t f )^2 \right) \nonumber\\
&
\leq 2 P_t\left( a |Z_0f |^2+b |Z_1  f |^2\right) + d \left(P_t f ^2 -  (P_t f )^2 \right) \nonumber.
\end{align}
Combining this estimate with the short time smoothing estimate, we arrive at
\begin{eqnarray*}
\sigma a |Z_0f_t|^2+\sigma b |Z_1  f_t|^2   
&\leq 2 P_{t-t_0/2}\left( a |Z_0f_{t_0/2} |^2+b |Z_1  f_{t_0/2} |^2\right) + d \left(P_{t-t_0/2} f_{t_0/2} ^2 -  (P_{t-t_0/2} f_{t_0/2} )^2 \right) \\
&\leq \sigma^{-1} 2  d P_{t-t_0/2}\left(P_{ t_0/2} f^2 -  (P_{t_0/2} f)^2 \right)  + d \left(P_{t-t_0/2} f_{t_0/2} ^2 -  (P_{t-t_0/2} f_{t_0/2} )^2 \right) .
\end{eqnarray*}
Similarly, one can obtain higher order estimates.
The smoothing estimate together with the gradient bounds,
for $\lambda \in(1,\infty)$ imply decay to equilibrium in supremum norm for all continuous functions.
\label{EndSmoothing4BS_dimfin}
We could treat an infinite dimensional version with interaction as follows.
We set
\[
\mathcal{L}\equiv \sum_{j\in\ZZ^d} L_i + \sum_{j,i\in\ZZ^d} G_{ij}x_j \partial_i +\sum_{i\in\ZZ^d} q_i \partial_i
\]
with $L_i$ denoting an isomorphic copy of $L$ acting on $x_i$ and $\partial_i\equiv \partial_{x_i}$, some constants $G_{ij}\in \mathbb{R}$  and some differentiable real functions  $q_i$ with bounded derivatives.
 We note that
 \[\mathcal{L} \langle x_k\rangle \leq \left(\varepsilon + \sup_k \|q_k\|_\infty\right) -\lambda \langle x_k\rangle +   \sum_{j \in\ZZ^d} |G_{kj}| \langle x_j\rangle    \]
and hence for $\epsilon_j\in(0,\infty)$ such that
\[\sum_j\epsilon_j<\infty,\qquad \sum_{j \in\ZZ^d} \epsilon_k |G_{kj}| \leq \xi \epsilon_j\]
with some $\xi\in(0,\infty)$, we obtain
\[\partial_t P_t\sum_k \epsilon_k\langle x_k\rangle \leq 
\left(\varepsilon + \sup_k \|q_k\|_\infty\right)\sum_j\epsilon_j - (\lambda-\xi) \sum_k \epsilon_k\langle x_k\rangle. \]
\label{BS.Cmpct_Infindim}
From this we conclude, that for $\lambda > \xi$
\[ \sup_t \left(P_t \sum_k \epsilon_k\langle x_k\rangle \right) < \infty,  \]
which implies a weak compactness of any sequence
of probability measures $P_{t_n}, n\in\mathbb{N}$, $t_k\to_{k\to\infty}\infty$, on a set
\[ \{\omega\in\mathbb{R}^{\ZZ^d}:\, \sum_k \epsilon_k\langle \omega_k\rangle< \infty \}\]
(see e.g. \cite{daprato,DragoniKontisZegarlinski2011}). We have  
\begin{align*}
\partial_s e^{ms}P_{t-s}\left( \sum_k |\partial_k f_s|^2 \right)  \leq&  e^{ms} P_{t-s}\left(-2e^{ms}\sum_{j\neq k}\left(Z_j\partial_kf_s\right)^2  \right) \\
&- 2 e^{ms} P_{t-s}\left(
 \sum_{k}  \left(Z_k \partial_kf_s -\partial_kf_s\right)^2\right)   
  \\&-  (2 (\lambda  - 1) - m)  e^{ms} P_{t-s}\left( \sum_{k} |\partial_k f_s|^2 \right) \\
&+2 e^{ms} P_{t-s}\left(
 \sum_{k,i}  \left(G_{ik} + \partial_k q_i \right)\partial_k f_s \partial_i f_s
\right), 
\end{align*}
and hence
\[
 \partial_s e^{ms}P_{t-s}\left( \sum_k |\partial_k f_s|^2 \right)  \leq   
   - C  e^{ms} P_{t-s}\left( \sum_{k} |\partial_k f_s|^2 \right)  
\]
with
\[C \equiv 2 (\lambda  - 1)  -m - \sup_i\sum_{k}  \left(|G_{ik}| +|G_{ki}|+ \|\partial_k q_i\|_\infty+\|\partial_i q_k\|_\infty \right) 
.\]
Hence, if
\[ 0 \leq m\leq 2 (\lambda  - 1)   - \sup_i\sum_{k}  \left(|G_{ik}| +|G_{ki}|+ \|\partial_k q_i\|_\infty+\|\partial_i q_k\|_\infty \right) ,\]
we get
\[\sum_k |\partial_k f_t|^2  \leq e^{-mt} P_t \sum_k |\partial_k f|^2 .  \]

We consider only smoothing estimate in infinite dimensions for the case when $G_{ij}\equiv 0$. In this case  we have  
\label{BeginSmoothing4BS_Dim_Infin} 
\begin{align*}
\partial_s&P_{t-s}\left( \sum_k\left( as|Z_{0,k}f_s|^2+bs^3|Z_{1,k} f_s|^2+ cs^2Z_{0,k}f_s\cdot Z_{1,k} f_s \right) +d f_s^2 \right)\\
=&P_{t-s}\sum_{j}\left( \sum_k\left(-2as|Z_{0,j}Z_{0,k}f_s|^2 -2bs^3|Z_{0,j} Z_{1,k} f_s|^2  
-2cs^2Z_{0,j}Z_{0,k} f_s\cdot Z_{0,j}Z_{1,k} f_s\right) -2d |Z_{0,j}f_s|^2 \right)\\
&+P_{t-s}\sum_k\left(  - 2as Z_{0,k} f_s\cdot  Z_{1,k}f_s 
+2 bs^3 Z_{1,k}f_s\cdot (\{Z_{0,k},Z_{1,k}\} -(\lambda+1)Z_{1,k})f_s \right)\\
&+P_{t-s}\sum_k\left( -  cs^2 |Z_{1,k} f_s|^2 +  cs^2 Z_{0,k}f_s\cdot  (\{Z_{0,k}, Z_{1,k}\} -(\lambda+1)Z_{1,k})f_s )
\right)\\
&+P_{t-s}\sum_k\left( a |Z_{0,k}f_s|^2+3bs^2|Z_{1,k} f_s|^2+ 2cs Z_{0,k}f_s\cdot Z_{1,k}  f_s     \right)\\
&
+\sum_{i\in\ZZ^d} P_{t-s}\sum_k\left( 2as Z_{0,k} f_s\cdot [ Z_{0,k} ,  q_i \partial_i]f_s \right)\\
&
+ \sum_{i\in\ZZ^d} P_{t-s}\sum_k\left( 2b s^3 Z_{1,k} f_s\cdot [Z_{1,k} ,q_i \partial_i ]f_s\right)\\
&
+\sum_{i\in\ZZ^d} P_{t-s}\sum_k\left(  cs^2  [ Z_{0,k} , q_i \partial_i]f_s\cdot Z_{1,k}f_s\right) \\
&
+\sum_{i\in\ZZ^d} P_{t-s}\sum_k\left(  cs^2   Z_{0,k} f_s\cdot [Z_{1,k}, q_i \partial_i]f_s\right). 
\end{align*}
Next, we estimate the new type of terms involving $q_i$'s, i.e. the last four terms above. We have\begin{align*}
\sum&_{i\in\ZZ^d} P_{t-s}\sum_k\left( 2as Z_{0,k} f_s\cdot [ Z_{0,k} ,  q_i \partial_i]f_s \right)
\\ &=\sum_{i\in\ZZ^d} P_{t-s}\sum_k \left( 2as Z_{0,k} f_s\cdot
\left( \varepsilon^{-1}(Z_{0,k} q_i ) Z_{1,i} f_s - \delta_{ik} q_k Z_{0,k} f_s\right) \right)\\
 &\leq \left( \delta^{-1}\varepsilon^{-1}
 \sup_j\sum_{i\in\ZZ^d}\|Z_{0,j} q_i \|_\infty\right) a^2  P_{t-s}\sum_k  |Z_{0,k} f_s|^2 
 +\left(\delta \sup_j\sum_i\|Z_{0,i} q_j \|_\infty\right) s^3 P_{t-s}\sum_{k\in\ZZ^d} |Z_{1,k} f_s|^2 \\
 &\quad+ \left( 2a \sup_i\|q_i\|_\infty\right) s  
 P_{t-s}\sum_k    |Z_{0,k} f_s|^2, 
 \end{align*}
 while
the second term can be estimated as \begin{align*}
 \sum&_{i\in\ZZ^d} P_{t-s}\sum_k\left( 2b s^3 Z_{1,k} f_s\cdot [Z_{1,k} ,q_i \partial_i ]f_s\right)\\& =
 \sum_{i\in\ZZ^d} P_{t-s}\sum_k\left( 2b s^3 Z_{1,k} f_s
\cdot \left(\varepsilon^{-1} (Z_{1,k}  q_i) Z_{1,i} \right)f_s  \right)
\\
& \leq  \varepsilon^{-1}b s^3\left( \sup_i \sum_j  \|Z_{1,j}  q_i\|_\infty +\sup_i \sum_j  \|Z_{1,i}  q_j\|_\infty \right) P_{t-s} \sum_k |Z_{1,k} f_s|^2 .
 \end{align*}
 Moreover, we have
 \begin{align*}
 \sum&_{i\in\ZZ^d} P_{t-s}\sum_k\left(  cs^2  [ Z_{0,k} , q_i \partial_i]f_s\cdot Z_{1,k}f_s\right)\\&=
\sum_{i\in\ZZ^d} P_{t-s}\sum_k\left(  \varepsilon^{-1} cs^2 \left( (Z_{0,k} q_i ) Z_{1,i} f_s - \delta_{ik} q_k Z_{1,k} f_s\right)\cdot Z_{1,k}f_s\right)   \\
&\leq \varepsilon^{-1} cs^2 \left( \sup_k  \|q_k\|_\infty + \sup_j\sum_{i\in\ZZ^d}  \varepsilon^{-1}\|Z_{0,j} q_i\|_\infty + \sup_i\sum_{j\in\ZZ^d}  \varepsilon^{-1}\|Z_{0,j} q_i\|_\infty \right) 
P_{t-s}\sum_k  |Z_{1,k} f_s|^2 
 \end{align*}
 and finally
 \begin{align*}
\sum&_{i\in\ZZ^d} P_{t-s}\sum_k\left(  cs^2   Z_{0,k} f_s\cdot [Z_{1,k}, q_i \partial_i]f_s\right) \\& =
\sum_{i\in\ZZ^d} P_{t-s}\sum_k\left(  cs^2   Z_{0,k} f_s\cdot  \left(\varepsilon^{-1} (Z_{1,k}  q_i) Z_{1,i} \right)f_s \right)  \\
&\leq 
\delta^{-1} \varepsilon^{-2}c^2s   \sup_j\sum_{i\in\ZZ^d}\|Z_{1,j}  q_i\|_\infty   P_{t-s}\sum_k |Z_{0,k} f_s|^2 \\
&\quad + \delta  s^3   \sup_i\sum_{j\in\ZZ^d}\|Z_{1,j}  q_i\|_\infty 
\cdot  P_{t-s}\sum_k |Z_{1,k} f_s|^2 ,
\end{align*}
for any $\delta\in(0,\infty)$.
Combining this together with \eqref{BS_1smooth} and assuming 
 \begin{eqnarray*}
&\sup_i \sum_j  \|Z_{1,j}  q_i\|_\infty +\sup_i \sum_j  \|Z_{1,i}  q_j\|_\infty  < \infty,\\
& \sup_k  \|q_k\|_\infty < \infty,\\
& \sup_j\sum_{i\in\ZZ^d}   \|Z_{0,j} q_i\|_\infty + \sup_i\sum_{j\in\ZZ^d}   \|Z_{0,j} q_i\|_\infty < \infty
\end{eqnarray*}
are sufficiently small, we conclude the short time smoothing.
Repeating this computation without algebraic coefficients in $s$, one obtains global gradient bounds which together 
with small time smoothing provides global smoothing estimates.
\label{EndSmooth4BS_Dim_Infin} \\

\begin{remark}[Grushin-type Operators] \label{Grushin Type Ops}  
Similarly, one can handle infinite dimensional models build upon a   model in $\mathbb{R}^k\times\mathbb{R}^n$ with generators of the following type
\[L= \Delta_x + |x|^2\Delta_y 
+\varepsilon 1\cdot\nabla_x-\lambda x\cdot\nabla_x.
\]
If the coefficients at the principal part are of higher order 
as in the following cases: 

(i) In $\mathbb{R}$
\[L= x^{2m}\partial_x^2 +\partial_x -\lambda D
\]
with $1<m\in\mathbb{N}$;\\

(ii) In $\mathbb{R}^k\times\mathbb{R}^n$
\[L= \Delta_x + |x|^{2m}\Delta_y 
+\varepsilon 1\cdot\nabla_x - \lambda x\cdot\nabla_x
\]
with $1<m\in\mathbb{N}$,
then the corresponding algebra becomes infinite and 
more involved arguments are necessary.
 \end{remark} 
\newpage

\end{document}